\spnewtheorem*{theoremA}{Theorem}
{\bf}{\itshape}
\spnewtheorem*{lemmaA}{Lemma}
{\bf}{\itshape}
\spnewtheorem*{propositionA}{Proposition}
{\bf}{\itshape}
\def\qed{\ifhmode\unskip\nobreak\fi\hfill
  \ifmmode\square\else$\square$\fi}
\newcommand{\NN}{\mathbb{N}}
\newcommand{\cclass}[1]{\mbox{{\sf #1}}}
\newcommand{\PP}{\cclass{P}}
\newcommand{\NPP}{\cclass{NP}}
\newcommand{\cNPP}{\cclass{co-NP}}
\newcommand{\cNPC}{\cclass{co-NP-complete}}
\newcommand{\cproblem}[1]{\textrm{\normalfont{#1}}}
\newcommand{\run}[3]{\cproblem{RUN\textsuperscript{#3}\textsubscript{(#1,#2)}}}
\newcommand{\runCD}{\run{$C$}{$D$}{}}
\newcommand{\runOD}{\run{1}{$D$}{}}
\newcommand{\runE}{\run{$\cdot$}{$\cdot$}{}}
\newcommand{\oo}{\operatorname{o}}
\newcommand{\OO}{\operatorname{O}}
\newcommand{\OOm}{\operatorname{\Omega}}
\newcommand{\TT}{\operatorname{\Theta}}
\newcommand{\ie}{i.e. }
\newcommand{\tup}[3]{#1, #2 \ldots #3}
\newcommand{\seq}[2]{#1, #2\ldots}
\newcommand{\crs}{\mathcal{C}}
\newcommand{\crsSet}{\mathcal{S}}
\newcommand{\blank}{\fgeupbracket}
\newcommand{\state}[1]{\textsc{#1}}
\newcommand{\qz}{q_{\state{0}}}
\newcommand{\qa}{q_{\state{acc}}}
\newcommand{\qr}{q_{\state{rej}}}
\title{Verifying whether One-Tape Non-Deterministic Turing Machines Run in Time $Cn+D$\thanks{This work is partially funded by the Slovenian Research Agency.}}
\author{David Gajser}
\institute{ IMFM, Jadranska 19, 1000 Ljubljana, Slovenija\\
\email{david.gajser@fmf.uni-lj.si}}
\begin{document}
\maketitle
\begin{abstract} We discuss the following family of problems, parameterized by integers $C\geq 2$ and $D\geq 1$: Does a given one-tape non-deterministic $q$-state Turing machine make at most $Cn+D$ steps on all computations on all inputs of length $n$, for all $n$?

Assuming a fixed tape and input alphabet, we show that these problems are \cNPC\ and we provide good non-deterministic and co-non-deterministic lower bounds. Specifically, these problems can not  be solved  in $\oo(q^{(C-1)/4})$ non-deterministic time by multi-tape Turing machines. We also show that the complements of these problems can be solved in $\OO(q^{C+2})$ non-deterministic time and not in $\oo(q^{(C-1)/2})$ non-deterministic time by multi-tape Turing machines.
\end{abstract}
\section{Introduction}
A year ago, there was a question on TCS Stack Exchange~\cite{forum} asking for a \emph{natural} problem in \NPP\ with a good deterministic time complexity lower bound. Everyone working in complexity theory knows, that such problems are hard to find. 
In the present paper we present a family of (relatively) natural \cNPC\ problems with good non-deterministic and co-non-deterministic lower bounds. Our results can be compared to the ones given by Adachi, Iwata and Kasai~\cite{Adachi} in 1984, where they proved good deterministic lower bounds for some problems that are complete in \PP.

Suppose someone gives us a 50-state one-tape non-deterministic Turing machine $M$. How can we verify whether $M$ runs in time $100n+1$? In other words, how can we verify, if $M$ makes at most $100n+1$ steps on each computation on each input of length $n$, for all $n$? Is there an algorithm that can help?

In the present paper we show how a Turing machine can verify whether a given $q$-state non-deterministic one-tape Turing machine \textbf{does not run} in time $100n+1$. We describe a non-deterministic multi-tape Turing machine that runs in time $\OO(q^{102})$ and does the job. What is more, we show that no $\OO(q^{49})$-time non-deterministic multi-tape Turing machine can do the same. We also show that there \textbf{does not} exist a non-deterministic multi-tape Turing machine that runs in time $\OO(q^{24})$ and solves the problem:
\begin{center}
\textbf{Does} a given $q$-state non-deterministic one-tape Turing machine run in time $100n+1$?
\end{center}
Moreover, this problem is shown to be \cNPC. All these results are presented in Sect.~\ref{uporaba}.

To prove the lower bounds, we make reductions from hard problems, for which hardness is proven by diagonalization. The diagonalization in Theorem~\ref{LB2} (non-deterministic lower bound) is straightforward and the diagonalization in Theorem~\ref{LB1} (co-non-deterministic lower bound) is implicit in the non-deterministic time hierarchy~\cite{Seiferas,Zak}. 
%
     
 For the upper bound, we use Theorem~\ref{glavni}, which we refer to as \emph{the compactness theorem}. Together with Lemma~\ref{lemma1} it assures, that in order to verify  whether a one-tape Turing machine runs in time $100n+1$, we only need to check the running time for small enough inputs. Section~\ref{glavno} is devoted to the compactness theorem.

We use crossing sequences to prove the compactness theorem. Specifically, we show that a Turing machine that runs in time $100n+1$ must produce some identical crossing sequences on each computation, if the input is long enough. Thus, when considering some fixed computation, we can partition the input on some parts where identical crossing sequences are generated and analyze each part independently. We prove that it is enough to consider small parts of the input. For more information and related work on crossing sequences, see \emph{Crossing Sequences} in Sect.~\ref{prelim}.

But why do we consider only one-tape Turing machines and linear time bounds, \ie the bounds $Cn+D$?  The answer is very simple: because essentially all other problems of verifying time bounds for Turing machines are un\-de\-ci\-da\-ble. This is argued in~\cite{jaz} where the author deals with the problem of verifying various time bounds for deterministic Turing machines. It is proven that, for multi-tape Turing machines, not even linear time bounds can algorithmically be verified. There is also no algorithm that would verify a time bound $T(n)=\OOm(n\log n)$, $T(n)\geq n+1$, for a given one-tape Turing machine. 
%
%
But if $T(n)=\oo(n\log n)$ is tangible enough, then there is an algorithm that verifies whether a given one-tape deterministic Turing machine runs in time $T(n)$. It is also shown in~\cite{jaz} that a one-tape Turing machine that runs in time $\oo(n\log n)$ must actually run in linear time, which implies that the most ``natural'' algorithmically verifyable time-bound for one-tape Turing machines is the linear one. Refering to~\cite{jaz} again, it is impossible to algorithmically verify whether a Turing machine (of any kind) runs in linear or even constant time, if the constant is not given. Thus we need $C$ and $D$ to be able to verify whether a one-tape Turing machine runs in time $Cn+D$.

%

\section{Preliminaries}
	\label{prelim}

\subsubsection{Basic Notation.} 
Let $\NN$ be the set of non-negative integers. 
All logarithms with no base written have base 2. We use $\epsilon$ for the empty word and $|w|$ for the length of a word $w$. For words $w_1$ and $w_2$, let $w_1w_2$ denote their concatenation.

We will use multi-tape Turing machines to solve decision problems. If not stated otherwise, lower and upper complexity bounds will be for this model of computation. We will not describe the model (any standard one will do). We will use notation DTM and NTM for deterministic and non-deterministic Turing machines.


\subsubsection{The Definition of Problems.} \emph{A one-tape NTM} is an 8-tuple $M=(Q, \Sigma, \Gamma, \blank,\\ \delta,\qz,\qa,\qr)$, where $Q$ is a set of states, $\Sigma\neq\emptyset$ an input alphabet, $\Gamma\supseteq\Sigma$ a tape alphabet, $\blank\in\Gamma\backslash \Sigma$ a blank symbol, $\delta:Q\backslash\{\qa,\qr\}\times\Gamma\rightarrow \mathcal{P}(Q\times\Gamma\times\{-1,1\})\backslash\{\emptyset\}$ a transition function and $\qz,\qa,\qr\in Q$ pairwise distinct starting, accepting and rejecting states. Here $\mathcal{P}$ denotes the power set.

As can be seen from the definition, the head of $M$ must move on each step and at the end of each finite computation the head of $M$ is in a halting state ($\qa$ or $\qr$). These properties will show very practical later, although the results of this paper do not rely on them.

For one-tape NTMs $M_1$ and $M_2$, the \emph{composition} of $M_1$ and $M_2$ is the NTM that starts computing as $M_1$, but has the starting state of $M_2$ instead of $M_1$'s accepting state. When the starting state of $M_2$ is reached, it computes as $M_2$. If $M_1$ rejects, it rejects.

\emph{A one-tape DTM} is a one-tape NTM where each possible configuration has at most one successive configuration.

The number of steps that a Turing machine $M$ makes on some computation $\zeta$ will be called \emph{the length of $\zeta$} and denoted by $|\zeta|$. 

For a function $T:\NN\rightarrow \NN$, if a Turing machine $M$, for each $n\in\NN$, makes \textbf{at most} $T(n)$ steps on all computations on inputs of length $n$, then we say that  $M$ \emph{runs in time $T(n)$}.

A main goal in this paper is to analyze the time complexity of problems 
$$\runCD=\{\textrm{one-tape NTMs that run in time }Cn+D\},$$ for $C,D\in\NN$. This will be done in Sect.~\ref{uporaba}, where we will also assume a fixed input alphabet $\Sigma$ and a fixed tape alphabet $\Gamma$. Strictly speaking, we will actually be analyzing problems $\runCD(\Sigma,\Gamma)$. This will enable us to have codes of $q$-state one-tape NTMs of length $\TT(q^2)$. 
Because $q$ will describe the length of the code up to a constant factor, we will usually express the complexity of algorithms with a $q$-state one-tape NTM as input in terms of $q$ instead of $n=\TT(q^2)$.

We use an overline to refer to the complements of the problems, like 
$$\overline{\runCD}=\{\textrm{one-tape NTMs that do not run in time }Cn+D\}.$$
The problem where $C$ and $D$ are parts of the input is denoted by $\runE$.


\subsubsection{Crossing Sequences.} For a one-tape Turing machine $M$, we can number the cells of its tape with integers so that the cell 0 is the one where $M$ starts its computation. Using this numbering we can number the boundaries between cells as shown on Fig.~\ref{tape}. Whenever we say that an input is written on the tape, we mean that its $i$th symbol is in cell $(i-1)$ and all other cells contain the blank symbol $\blank$.

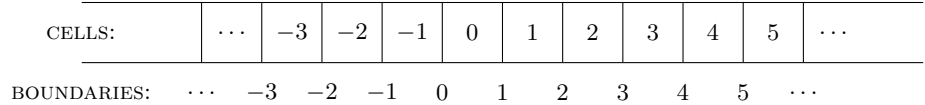
\begin{figure}[!htb]
	\begin{center}
	\begin{tikzpicture}[scale=0.8]
	\tikzstyle{prazno}=[inner sep=0pt, minimum size=0em]
		\node at ( -6,0) (c-50) [prazno] {};
		\node at ( -4,0) (c-40) [prazno] {};
		\node at ( -3,0) (c-30) [prazno] {};
		\node at (-2,0) (c-20)[prazno]  {};
		\node at ( -1,0) (c-10)[prazno]  {};
		\node at ( 0,0) (c00) [prazno] {};
		\node at ( 1,0) (c10) [prazno] {};
		\node at ( 2,0) (c20)  [prazno] {};
		\node at ( 3,0) (c30)[prazno]   {};
		\node at (4,0) (c40)[prazno]   {};
		\node at ( 5,0) (c50) [prazno]  {};
		\node at (6,0) (c60) [prazno]  {};
		\node at ( 8,0) [prazno]   {}
			edge [-] (c-50);
		\node at ( -6,-0.5) [prazno] {\sc{boundaries}:};
		\node at ( -4,-0.5)  [prazno] {$\ldots$ };
		\node at ( -3,-0.5)  [prazno] {$-3$};
		\node at (-2,-0.5) [prazno]   {$-2$};
		\node at ( -1,-0.5)[prazno]   {$-1$};
		\node at ( 0,-0.5)  [prazno] {$0$};
		\node at ( 1,-0.5) [prazno]  {$1$};
		\node at ( 2,-0.5)   [prazno] {$2$};
		\node at ( 3,-0.5) [prazno]   {$3$};
		\node at (4,-0.5)[prazno]   {$4$};
		\node at ( 5,-0.5)  [prazno]  {$5$};
		\node at (6,-0.5)  [prazno]  {$\ldots$};
		\node at ( 8,-0.5) [prazno]   {};
		\node at ( -6,1) (c-51) [prazno]  {};
		\node at ( -4,1) (c-41) [prazno]  {}
			edge [-] (c-40);
		\node at ( -3,1) (c-31) [prazno]  {}
			edge [-] (c-30);
		\node at (-2,1) (c-21)[prazno]   {}
			edge [-] (c-20);
		\node at ( -1,1) (c-11)  [prazno] {}
			edge [-] (c-10);
		\node at ( 0,1) (c01) [prazno] {}
			edge [-] (c00);
		\node at ( 1,1) (c11) [prazno]  {}
			edge [-] (c10);
		\node at ( 2,1) (c21) [prazno]  {}
			edge [-] (c20);
		\node at ( 3,1) (c31) [prazno]  {}
			edge [-] (c30);
		\node at (4,1) (c41)[prazno]  {}
			edge [-] (c40);
		\node at ( 5,1) (c51) [prazno]  {}
			edge [-] (c50);
		\node at (6,1) (c61) [prazno]  {}
			edge [-] (c60);
		\node at ( 8,1)  [prazno]  {}
			edge [-] (c-51);
		\node at ( -6,0.5) [prazno] {\sc{cells}:};
		\node at ( -3.5,0.5)  [prazno] {$\ldots$ };
		\node at ( -2.5,0.5)  [prazno] {$-3$};
		\node at (-1.5,0.5) [prazno]   {$-2$};
		\node at ( -0.5,0.5)[prazno]   {$-1$};
		\node at ( 0.5,0.5)  [prazno] {$0$};
		\node at ( 1.5,0.5) [prazno]  {$1$};
		\node at ( 2.5,0.5)   [prazno] {$2$};
		\node at ( 3.5,0.5) [prazno]   {$3$};
		\node at (4.5,0.5)[prazno]   {$4$};
		\node at ( 5.5,0.5)  [prazno]  {$5$};
		\node at (6.5,0.5)  [prazno]  {$\ldots$};
		\node at ( 8,0.5) [prazno]   {};
	\end{tikzpicture}
	\end{center}
	\caption{Numbering of tape cells and boundaries of a one-tape Turing machine.}
	\label{tape}
\end{figure}

Suppose that a one-tape NTM $M$ on first $t\in\NN\cup\{\infty\}$ steps of computation $\zeta$ on input $w$ crosses boundary $i$ of its tape at steps $\seq{t_1}{t_2}$ (this sequence can be finite or infinite). If $M$ was in state $q_j$ after the step $t_j$ for all $j$, then we say that $M$ produces the \emph{crossing sequence} $\crs^t_i(M,\zeta,w)=\seq{q_1}{q_2}$ and we denote its length by $|\crs^t_i(M,\zeta,w)|\in\NN\cup\{\infty\}$. Note that this sequence contains all information that the machine carries across the $i$th boundary of the tape in the first $t$ steps of computation $\zeta$. If we denote $\crs_i(M,\zeta,w)=\crs^{|\zeta|}_i(M,\zeta,w)$, the following trivial identity holds: $$|\zeta|=\sum_{i=-\infty}^\infty |\crs_i(M,\zeta,w)|.$$

Many properties of crossing sequences were proven already in 1960s by Hennie~\cite{hennie}, Hartmanis~\cite{Hartmanis} and Trakhtenbrot~\cite{Trakhtenbrot} and in 1980s by Kobayashi~\cite{Kobayashi}. They also proved that a one-tape deterministic Turing machine which runs in time $\oo(n \log n)$, produces only crossing sequences of bounded length and accepts a regular language. Later Tadaki, Yamakami and Lin~\cite{summary} proved the same for one-tape non-deterministic Turing machines. Properties of crossing sequences generated by one-tape non-deterministic Turing machines were also analyzed by Pighizzini~\cite{Pighizzini}.


\subsubsection{A Technical Lemma.} The proof of the next lemma can be found in Appendix~\ref{a1}.

\begin{lemma}
	\label{malaLema}
For every $q\geq 2$ and $C\in\NN$, it holds
$$\sum_{j=0}^C q^j(C-j)=\frac{q^{C+1}-(C+1)q+C}{(q-1)^2}\leq 4q^{C-1}.$$
\end{lemma}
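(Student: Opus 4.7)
The proof splits naturally into two parts: establishing the closed-form identity and then deriving the upper bound $4q^{C-1}$ from it.

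For the identity, my plan is to avoid the derivative trick and instead rewrite $C-j$ as $\sum_{i=j+1}^{C} 1$, then swap the order of summation. After the swap the double sum becomes $\sum_{i=1}^{C} \sum_{j=0}^{i-1} q^j$, where the inner sum is a standard finite geometric series equal to $(q^i-1)/(q-1)$. Applying the geometric series formula a second time to the resulting sum over $i$ (and noting that the constant term contributes a $-C$) collects everything over the denominator $(q-1)^2$ and yields the claimed closed form $\bigl(q^{C+1} - (C+1)q + C\bigr)/(q-1)^2$. The degenerate case $C=0$ is handled trivially since both sides evaluate to zero.

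For the inequality, the point is that the ``correction'' $-(C+1)q + C$ in the numerator is non-positive whenever $q \geq 2$ and $C \geq 0$, since $(C+1)q - C \geq (C+1) \cdot 2 - C = C+2 > 0$. Dropping these terms therefore only enlarges the expression, so it suffices to show $q^{C+1}/(q-1)^2 \leq 4q^{C-1}$, equivalently $q^2/(q-1)^2 \leq 4$, equivalently $q/(q-1) \leq 2$. The last inequality is immediate from $q \geq 2$, which completes the proof.

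There is no real obstacle here: the argument is essentially a bookkeeping exercise, and the only mild trap is remembering to treat $C=0$ separately (so that one does not divide by an empty geometric series) and to check that the sign of $-(C+1)q + C$ goes the right way when bounding the numerator.
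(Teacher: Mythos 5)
Your proof is correct. The inequality half is essentially identical to the paper's: both note that $-(C+1)q + C \leq 0$ (you make the sign check explicit, the paper just drops the term), then reduce to $q^{C+1}/(q-1)^2 \leq 4q^{C-1}$, i.e.\ $q/(q-1)\leq 2$, which holds for $q\geq 2$.

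For the closed-form identity you take a genuinely different route. The paper writes $\sum_{j=0}^C q^j(C-j) = C\sum_{j=0}^C q^j - q\,\frac{d}{dq}\sum_{j=0}^C q^j$ and then differentiates the geometric-series formula, a compact calculus trick that collapses the computation into one symbolic step. You instead expand $C-j = \sum_{i=j+1}^{C} 1$, swap the order of summation to get $\sum_{i=1}^{C}\sum_{j=0}^{i-1} q^j$, and apply the finite geometric series twice. Your argument is purely combinatorial and stays within elementary algebra (no differentiation), at the cost of a slightly longer bookkeeping chain; the paper's is shorter to write but implicitly invokes polynomial calculus. Both arrive at $\bigl(q^{C+1}-(C+1)q+C\bigr)/(q-1)^2$ correctly, and your separate check of $C=0$ (where the inner sum is empty) is a sensible, if not strictly necessary, precaution.
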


\section{The Compactness Theorem}
	\label{glavno}

In this section we present the first result of this paper, the compactness theorem. If we explain it simple: If we want to verify that a NTM $M$ runs in time $Cn+D$, we only need to verify the number of steps that $M$ makes on inputs of some bounded length.
The same result for DTMs can be found in~\cite{jaz}, but the bound in the present paper is much better.

Before we formally state the theorem, let us introduce some notation. For a one-tape NTM $M$, define
$$\crsSet_n(M)= \{\crs_i^t(M,\zeta,w);\ |w|=n,\ 1\leq i \leq n,\ \zeta \textrm{ computation on input }w,\ t\leq |\zeta|\},$$
so $\crsSet_n(M)$ is the set of all possible beginnings of crossing sequences that $M$ produces on inputs of length $n$ on boundaries $\tup{1}{2}{n}$.

A bit more work will be with the definition of $t_M(w,\crs)$. Intuitively, $t_M(w,\crs)$ is the maximum number of steps that a one-tape NTM $M$ makes on a \textbf{part} $w$ of an imaginary input, if we only consider such computations, on which $M$ produces the crossing sequence $\crs$ on both two ending boundaries of $w$. To define it more formally, we will describe a valid \emph{computation of $M$ on part $w$ with ending crossing sequence $\crs=(\tup{q_1}{q_2}{q_l})$}. We will use the term \emph{standard case} to refer to the definition of computation of a NTM on a given input (not on a part).
Assume $|w|=n\geq 1$ and let $M=(Q,\Sigma,\Gamma,\blank, \delta,\qz,\qa,\qr)$.
%
\begin{itemize}
\item A valid configuration is a 5-tuple $(\crs_1,\tilde{w},i,\tilde{q},\crs_2)$, where $\crs_1$ is the \emph{left crossing sequence}, $\tilde{w}$ is some word from $\Gamma^n$, $0\leq i \leq n-1$ is the position of the head, $\tilde{q}\in Q$ is the current state of $M$ and $\crs_2$ is the \emph{right crossing sequence}. 
Intuitively, $\crs_1$ and $\crs_2$ are the endings of $\crs$ that still need to be matched.
\item The \emph{starting configuration} is $((\tup{q_2}{q_3}{q_l}),w,0,q_1,(\tup{q_1}{q_2}{q_l}))$. As in the standard case, we imagine the input being written on the tape of $M$ with the first bit in cell 0 (where also the head of $M$ is). The head will never leave the portion of the tape where input is written. Note that $q_1$ is missing in the left crossing sequence because we pretend that the head just moved from cell -1 to cell 0.
\item Valid configurations $A=(\crs_{1A},w_{A},i,q_A,\crs_{2A})$ and $B=(\crs_{1B},w_B,j,q_B,\crs_{2B})$ are successive, if one of the following holds:
	\begin{itemize}
	\item the transition function of $M$ allows $(w_A,i,q_A)$ to change into $(w_B,j,q_B)$ as in the standard case, $\crs_{1A}=\crs_{1B}$ and $\crs_{2A}=\crs_{2B}$,
	\item $i=j= 0$, $\crs_{1A}$ is of the form $(\tilde{q},q_B, \crs_{1B})$, $w_A=a\tilde{w}$, $w_B=b\tilde{w}$, $(\tilde{q},b,-1)\in\delta(q_A,a)$ and $\crs_{2A}=\crs_{2B}$,
	\item $i=j=n-1$, $\crs_{2A}$ is of the form $(\tilde{q},q_B, \crs_{2B})$, $w_A=\tilde{w}a$, $w_B=\tilde{w}b$ and $(\tilde{q},b,1)\in\delta(q_A,a)$ and $\crs_{1A}=\crs_{1B}$.
	\end{itemize}
\item There is a special \emph{ending configuration} that can be reached from configurations of the form
	\begin{itemize}
	\item $((q_l),a\tilde{w},0,\tilde{q},())$, if $(q_l,b,-1)\in\delta(\tilde{q},a)$ for some $b\in\Gamma$ or
	\item $((),\tilde{w}a,n-1,\tilde{q},(q_l))$, if  $(q_l,b,1)\in\delta(\tilde{q},a)$ for some $b\in\Gamma$.
	\end{itemize}
\item A valid computation of $M$ on part $w$ with ending crossing sequence $\crs$ is any sequence of successive configurations that begins with the starting configuration and ends with an ending configuration.
\end{itemize}
Similar as in the standard case, we can define $\crs_i(M,\zeta,w,\crs)$ as the crossing sequence, generated by $M$ on computation $\zeta$ on part $w\in\Sigma^n$ with ending crossing sequence $\crs$ on boundary $i$ ($1\leq i\leq n-1$) . We define
$$|\zeta|=\sum_{i=1}^{n-1}|\crs_i(M,\zeta,w,\crs)|+|\crs|$$
as the length of computation $\zeta$. To justify the definition, let us look at an example.

Suppose an input $w_1ww_2$ is given to $M$, $|w_1|,|w|\geq 1$. Let computation $\zeta_0$ produce the same crossing sequence $\crs$ on boundaries $|w_1|$ and $|w_1|+|w|$. Let $\zeta$ be the corresponding computation of $M$ on part $w$. Then $M$ on computation $\zeta_0$ spends exactly $|\zeta|$ steps on part $w$. What is more, if input $w_1w_2$ is given to $M$ (we cut out $w$) and we look at computation $\zeta_1$ which corresponds to $\zeta_0$, thus forming a crossing sequence $\crs$ on boundary $|w_1|$, then $|\zeta_1|=|\zeta_0|-|\zeta|$. Such considerations will be very useful in the proof of the compactness theorem.

We define $t_M(w,\crs)\in\NN\bigcup\{-1\}$ as the maximum length of computations of $M$ on part $w$ with ending crossing sequence $\crs$. If there is no valid computation of $M$ on part $w$ with ending crossing sequence $\crs$ or $|\crs|=\infty$, then we define $t_M(w,\crs)=-1$.

\begin{theorem}[The compactness theorem]
	\label{glavni}
Let $M$ be a one-tape NTM with $q$ states and let $C,D\in \NN$. 
 Denote $\ell=D+8q^C$, $r=D+12q^C$ and $\crsSet=\bigcup_{n=1}^\ell \crsSet_n(M)$. It holds:

$M$ runs in time $Cn+D$ \textbf{if and only if}
\begin{enumerate}[a)]
\item for each input $w$ of length at most $\ell$ and for each computation $\zeta$ of $M$ on $w$, it holds $|\zeta|\leq C|w|+D$ \textbf{and}
	\label{a}
\item for each $\crs\in \crsSet$ and for each part $w$ of length at most $r$, for which $t_M(w,\crs)\geq 0$,
 it holds $t_M(w,\crs)\leq C|w|$.
	\label{b}
\end{enumerate}
\end{theorem}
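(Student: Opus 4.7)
The plan is to prove the two implications separately. The forward direction is a straightforward pumping argument via the cut-and-paste principle for one-tape machines; the backward direction, the substantive one, combines a pigeonhole principle on crossing sequences with a shrinking argument applied to a shortest counterexample.

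For the forward direction, part (a) is immediate from the definition of running in time $Cn+D$. For (b), I argue by contrapositive. Suppose some $\crs\in\crsSet_{n_0}(M)$ with $n_0\leq\ell$ and some part $w$ satisfy $t_M(w,\crs)>C|w|$, and fix a witnessing input $u=u_Lu_R$ of length $n_0$ on which $\crs$ appears at the boundary between $u_L$ and $u_R$ during some computation of $M$. By the standard cut-and-paste principle, for every $k\geq 1$ one can splice $k$ copies of $w$ at that boundary, each accompanied by a maximal part-computation on $w$ of length $t_M(w,\crs)$, producing a valid computation of $M$ on $u_Lw^ku_R$ of total length $E+k\cdot t_M(w,\crs)$ with $E=E(u_L,u_R)$ independent of $k$. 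Since $t_M(w,\crs)-C|w|>0$, this length eventually exceeds $C(|u_L|+k|w|+|u_R|)+D$, contradicting the hypothesis.

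For the backward direction, assume (a) and (b) and suppose $M$ does not run in time $Cn+D$. Take a shortest counterexample input $w$ of length $n$ with computation $\zeta$ satisfying $|\zeta|>Cn+D$; by (a), $n>\ell=D+8q^C$. The aim is to produce $i<j\in\{1,\ldots,n\}$ with $\crs_i(M,\zeta,w)=\crs_j(M,\zeta,w)=:\crs$ of length $\leq C$ and $n-\ell\leq j-i\leq r$. These two bounds are chosen deliberately: $j-i\leq r$ makes hypothesis (b) applicable once $\crs\in\crsSet$ is known, while $j-i\geq n-\ell$ forces the cut input $w'=w[0..i-1]\,w[j..n-1]$ to have length at most $\ell$, whereupon $\crs$ appears at a boundary of $w'$ and hence $\crs\in\crsSet_{|w'|}(M)\subseteq\crsSet$ automatically. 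Using Lemma~\ref{malaLema} (both $\sum_{m=0}^{C}q^m\leq 2q^{C}$ on the number of crossing sequences of length $\leq C$, and the weighted bound $\sum_{m=0}^{C}q^m(C-m)\leq 4q^{C-1}$), together with the observation that a profusion of long crossing sequences would quickly saturate $|\zeta|$, one shows that sufficiently many of the $n$ boundaries carry short crossing sequences to force the desired repetition via pigeonhole.

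Given such $i<j$ and $\crs$, cut-and-paste yields a computation of $M$ on $w'$ of length $|\zeta|-L$ with $L\leq t_M(w[i..j-1],\crs)\leq C(j-i)$ by (b). This computation on $w'$ has length at least $Cn+D-C(j-i)=Cn'+D$ with $n'=|w'|<n$, contradicting the minimality of $w$. The hardest step is the pigeonhole argument: simultaneously controlling the short-versus-long distribution of crossing sequences in $\zeta$, packing repeats close together, and ensuring the admissible window $[n-\ell,r]$ is non-empty. This is where the specific constants $8q^{C}$ and $12q^{C}$ in $\ell$ and $r$ earn their keep, and where the $4q^{C-1}$ estimate of Lemma~\ref{malaLema} plays the decisive quantitative role. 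A secondary subtlety is that when $n>\ell+r$ the window $[n-\ell,r]$ is empty; such oversized counterexamples must be handled by a preliminary shrinking step (whose crossing-sequence realizations in $\crsSet$ also need justification) to bring $n$ below $\ell+r$ before the main pigeonhole applies.
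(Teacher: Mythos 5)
Your overall framework (pigeonhole on crossing sequences + shortest counterexample + cut-and-paste) is the right one, and your forward direction is fine — it is essentially the paper's Lemma~\ref{lemma1}. But the backward direction as proposed has a genuine logical gap that you yourself half-notice and then paper over, and the way you propose to close it does not match how the theorem actually gets proved.

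The crux is that you want to find boundaries $i<j$ satisfying the \emph{two} constraints $n-\ell\leq j-i$ (so that the cut word $w'=w[0..i-1]w[j..n-1]$ has length $\leq\ell$, which is your mechanism for certifying $\crs\in\crsSet$) and $j-i\leq r$ (so that hypothesis~\ref{b}) applies to the part $w[i..j-1]$). As you observe, this window is empty whenever $n>\ell+r$, and the shortest counterexample can a priori be that long: nothing in (a) or (b) caps $n$ near $\ell+r$. Your proposed ``preliminary shrinking step'' is exactly where the real work lives, and there is a circularity you cannot escape on this route: to shrink you want to cut out a segment between two boundaries carrying identical crossing sequences, but to invoke (b) on that segment you again need both that the segment is short ($\leq r$) \emph{and} that the crossing sequence lies in $\crsSet$ (certified by the complementary piece being short, $\leq\ell$). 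The two requirements pull in opposite directions for large $n$, so the induction never gets started. Also, the constraint that $\crs$ have length $\leq C$ is not something the argument needs and is not in general achievable; dropping it doesn't help with the window problem.

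The paper's proof goes the other way around, and the asymmetry is the point. After fixing $t=Cn+D$ and using Lemma~\ref{lemma2} to find three boundaries $i_1<i_2<i_3$ with identical $\crs^t$, it picks $j_1<j_2$ inside $[i_1,i_2]$ so that $\crs^{t+1}_{j_1}=\crs^{t+1}_{j_2}=\crs$ with the sequences strictly between pairwise distinct. Minimality of $w$ (not hypothesis (b)) is what yields $t_M(w(j_1,j_2),\crs)>C(j_2-j_1)$: cutting out $w(j_1,j_2)$ gives a strictly shorter input on which the first $t+1-|\zeta_1|$ steps must respect $Cn'+D$. Then to show $\crs\in\crsSet$ one does \emph{not} cut only $w(j_1,j_2)$; one recursively removes all redundant segments outside $(j_1,j_2)$ to get $w_0=w_1\,w(j_1,j_2)\,w_2$, and then cuts $w(j_1,j_2)$ to get $\tilde w=w_1w_2$, on which no crossing sequence repeats three times, so $|\tilde w|\leq D+8q^C=\ell$ by Lemma~\ref{lemma2}; hence $\crs\in\crsSet$. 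Only now is (b) invoked, and because we already know $t_M(w(j_1,j_2),\crs)>C(j_2-j_1)$, (b) forces $j_2-j_1>r$ — the opposite inequality to the one you wanted. The contradiction is then delivered by a second application of Lemma~\ref{lemma2} to $w_0$ (where no crossing sequence repeats more than three times and $|w_0|>r=D+12q^C$), which forces $t_0>C|w_0|+D$ and hence $w_0=w$, incompatible with $t_0=t=Cn+D$. So the constants $8q^C$ and $12q^C$ correspond to the multiplicities $2$ and $3$ in two separate invocations of Lemma~\ref{lemma2}, not to a single window $[n-\ell,r]$; your packaging of them as a window is where the argument derails.
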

A skilled reader may notice that the following lemma already proves one direction of the theorem.
\begin{lemma}
	\label{lemma1}
Let everything be as in Theorem~\ref{glavni}. If \ref{b}) does not hold, then there exists some input $\tilde{w}$ of length at most $\ell+(Cr+D)r$ such that $M$ makes more than $C|\tilde{w}|+D$ steps on $\tilde{w}$ on some computation.
\end{lemma}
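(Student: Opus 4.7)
Suppose \ref{b}) fails: choose $\crs=(q_1,\ldots,q_l)\in\crsSet$ and a part $\hat w$ with $|\hat w|\leq r$ and $t_M(\hat w,\crs)>C|\hat w|$, and then fix a computation $\hat\zeta$ of $M$ on part $\hat w$ with ending crossing sequence $\crs$ and $|\hat\zeta|=t_M(\hat w,\crs)$. Since $\crs\in\crsSet_n(M)$ for some $n\leq\ell$, also fix an input $w_0$ of length $n$, a computation $\zeta_0$ of $M$ on $w_0$, a boundary $i\in\{1,\ldots,n\}$, and $t\leq|\zeta_0|$ with $\crs=\crs_i^t(M,\zeta_0,w_0)$. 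My plan is to insert $k:=Cr+D$ copies of $\hat w$ into $w_0$ at boundary $i$, producing
\[
\tilde w \;:=\; w_0[1..i]\cdot\hat w^k\cdot w_0[i+1..n]
\]
of length $|\tilde w|=n+k|\hat w|\leq\ell+(Cr+D)r$, and then to exhibit a computation of $M$ on $\tilde w$ of length strictly more than $C|\tilde w|+D$.

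The witness $\tilde\zeta$ would be built by splicing. On the cells of $\tilde w$ that correspond to $w_0[1..i]$ and $w_0[i+1..n]$, $\tilde\zeta$ imitates the first $t$ steps of $\zeta_0$ transition by transition; on the cells of each of the $k$ inserted copies of $\hat w$ it runs $\hat\zeta$. The observation that neutralises the fact that $\crs$ is \emph{a priori} only a beginning of the full crossing sequence of $\zeta_0$ at boundary $i$ is the following: by the very definition $\crs=\crs_i^t(M,\zeta_0,w_0)$, the first $t$ steps of $\zeta_0$ produce exactly $\crs$ on boundary $i$, and by the definition of $t_M(\hat w,\crs)$, the computation $\hat\zeta$ produces $\crs$ on each of the two ending boundaries of $\hat w$. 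Consequently each of the $k+1$ new internal boundaries of $\tilde w$ carries $\crs$ on both sides, in the same order and at matching time-steps, so the concatenation is a valid prefix of an $M$-computation on $\tilde w$. If this prefix already ends in a halting state (possible only when $\zeta_0$'s own step $t$ is halting), we take $\tilde\zeta$ to equal it; otherwise we extend it arbitrarily using the non-empty transition function of $M$.

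By the formula $|\hat\zeta|=\sum_j|\crs_j|+|\crs|$ and the analogous partition for the embedded computation, the spliced prefix has $t$ steps with the head outside the inserted copies and $t_M(\hat w,\crs)$ steps per copy, for a total of $t+k\cdot t_M(\hat w,\crs)$. Writing $e:=t_M(\hat w,\crs)-C|\hat w|\geq 1$, this gives
\[
|\tilde\zeta|-C|\tilde w|-D \;\geq\; (t-Cn-D)+ke \;\geq\; 1+C(r-n)\;\geq\;1,
\]
using $t\geq|\crs|\geq 1$, $k=Cr+D$, $e\geq 1$, and $r>\ell\geq n$. Hence $|\tilde\zeta|>C|\tilde w|+D$, as required. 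The main technical obstacle is the splicing argument itself: one has to follow the head's trajectory through all $k+1$ new boundaries of $\tilde w$ and check that the states appearing in $\crs$ really do line up on the two sides of each boundary in identical order and at matching time-steps, so that the patched-together sequence is a genuine sequence of $M$-transitions rather than just a bookkeeping device.
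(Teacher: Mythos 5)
Your proof is correct and follows the paper's argument essentially step-for-step: both fix a witness to the failure of \ref{b}), recover a concrete input/boundary/time from $\crsSet$, insert $Cr+D$ copies of the offending part at that boundary, and count the steps of the spliced computation to beat $C|\tilde w|+D$. The only differences are notational (your $w_0[1..i]$, $w_0[i+1..n]$ versus the paper's $w_1$, $w_2$) and a slightly rearranged but equivalent final inequality chain.
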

\begin{proof}
 If \ref{b}) does not hold, then there exists some finite crossing sequence $\crs\in \crsSet$, a part $w$ of length at most $r$ and a valid computation $\zeta$ of $M$ on part $w$ with ending crossing sequence $\crs$, such that $|\zeta|\geq C|w|+1$. From the definition of $\crsSet$ we know that there exist words $w_1$ and $w_2$ such that $|w_1|\geq 1$ and $|w_1|+|w_2|\leq \ell$, $t_0\in\NN$ and a computation $\zeta_0$, such that $\crs$ is generated by $M$ on input $w_1w_2$ on computation $\zeta_0$ on boundary $|w_1|$ after $t_0$ steps.

Now let us look at the input $\tilde{w}=w_1w^{Cr+D}w_2$ for $M$. Let $\tilde{\zeta}$ be a computation of $M$ on $\tilde{w}$, that on part $w_1$ (and left of it) and on part $w_2$ (and right of it) acts like the first $t_0$ steps of $\zeta_0$ and on parts $w$ it acts like $\zeta$. Note that after $\tilde{\zeta}$ spends $t_0$ steps on parts $w_1$ and $w_2$, crossing sequence $\crs$ is generated on boundaries $\tup{|w_1|}{|w_1|+|w|}{|w_1|+|w|^{Cr+D}}$. By that time, $M$ makes at least $t_0+(Cr+D)(C|w|+1)$ steps, which is at least $1+(Cr+D)(C|w|+1)\geq C(Cr+D)|w|+C(|w_1|+|w_2|)+D+1=C|\tilde{w}|+D+1$. Because $|\tilde{w}|\leq \ell+(Cr+D)r$, the lemma is proven.\qed
\end{proof}
Next, we prove the main lemma for the proof of the other direction of the compactness theorem.
\begin{lemma}
	\label{lemma2}
Let $C,D$ be non-negative integers, $M$ a one-tape $q$-state NTM and $w$ an input for $M$ of length $n$. Assume that $M$ makes at least $t\leq Cn+D$ steps on input $w$ on computation $\zeta$ and suppose that each crossing sequence produced by $M$ on $\zeta$ after $t$ steps on boundaries $\tup{1}{2}{n}$ appears at most $k$ times. Then $n\leq D+4kq^C$.
\end{lemma}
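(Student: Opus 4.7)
For each $j \geq 0$, I would let $n_j$ denote the number of indices $i \in \{1, 2, \ldots, n\}$ for which $|\crs_i^t(M,\zeta,w)| = j$. The argument rests on three observations that I would establish first: (i) $\sum_{j\geq 0} n_j = n$, since each boundary has a unique finite crossing-sequence length (finite because $t$ is finite, so only finitely many crossings occur); (ii) $\sum_{j\geq 0} j\,n_j \leq t \leq Cn+D$, which follows from the identity $t = \sum_{i\in\ZZ}|\crs_i^t(M,\zeta,w)|$ (each step of $\zeta$ crosses exactly one boundary) after restricting the sum to $i\in\{1,\ldots,n\}$; and (iii) $n_j \leq k q^j$, because there are at most $q^j$ distinct length-$j$ sequences of states, and by hypothesis each such sequence occurs at most $k$ times on the boundaries $1,\ldots,n$.

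The main step is then to split $n = \sum_{j\leq C} n_j + \sum_{j>C} n_j$ at the threshold $j=C$. The short part is controlled directly by (iii): $\sum_{j=0}^C n_j \leq k\sum_{j=0}^{C} q^j \leq 2kq^C$ for $q\geq 2$. For the long part, I would rearrange (ii) by subtracting $C\sum_j n_j = Cn$ from both sides, obtaining
$$\sum_{j>C}(j-C)\,n_j \leq D + \sum_{j<C}(C-j)\,n_j,$$
and then bound the right-hand side using (iii) and Lemma~\ref{malaLema} to get
$$\sum_{j>C} n_j \leq \sum_{j>C}(j-C)\,n_j \leq D + k\sum_{j=0}^{C-1}(C-j)\,q^j \leq D + 4kq^{C-1}.$$

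Adding the two pieces yields $n \leq 2kq^C + D + 4kq^{C-1} = D + 2kq^{C-1}(q+2) \leq D + 4kq^C$, where the last inequality uses $q\geq 2$. The whole argument is really a counting exercise; the one genuine design choice is splitting at $j=C$, but this is essentially forced by Lemma~\ref{malaLema}, which is tailored to sums of the form $\sum_j q^j(C-j)$ and captures precisely the slack that a $Cn+D$ budget allows below the threshold. I do not foresee a serious obstacle: the only detail to be careful with is confirming that every step contributes to exactly one boundary's crossing sequence (so that (ii) is a genuine upper bound and not merely a lower bound), which follows from the one-tape model's convention that the head moves on every step.
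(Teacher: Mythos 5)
Your proof is correct and rests on the same three ingredients as the paper's: the identity $t \geq \sum_{i=1}^n|\crs_i^t(M,\zeta,w)|$, the counting bound $n_j \leq kq^j$, and Lemma~\ref{malaLema}. The only difference is organizational: the paper packages the whole argument in a single chain of inequalities by writing $n \leq D + \sum_{i=1}^n (C+1 - |\crs_i^t(M,\zeta,w)|)$ and then dropping the non-positive summands (those with $|\crs_i^t| > C+1$), whereas you split explicitly at the threshold $j=C$ and bound the short and long parts separately before recombining---the same counting argument with slightly more bookkeeping.
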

\begin{proof}
We know that $Cn+D\geq t\geq\sum_{i=1}^n|\crs_i^t(M,\zeta,w)|$, thus
\begin{align*}
n&\leq D+(C+1)n-\sum_{i=1}^n|\crs_i^t(M,\zeta,w)|
=D+\sum_{i=1}^n(C+1-|\crs_i^t(M,\zeta,w)|)\\
&\leq D+\sum_{j=0}^{C+1}\sum_{\substack{i=1\\ |\crs_i^t(M,\zeta,w)|=j}}^n(C+1- j)
\leq D+  \sum_{j=0}^{C+1}kq^j(C+1- j)
 \leq D+4kq^C,
\end{align*}
where the last inequality follows from Lemma~\ref{malaLema}.\qed
\end{proof}

Before going into the proof of the compactness theorem, let us define $w(i,j)$ as the subword of a word $w$, containing characters from $i$th to $j$th, including $i$th and excluding $j$th (we start counting with 0). Alternatively, if $w$ is written on a tape of a Turing machine, $w(i,j)$ is the word between the $i$th and $j$th boundary.
\begin{proof}[of the compactness theorem (Theorem~\ref{glavni})]
If $M$ runs in time $Cn+D$, then \ref{a}) obviously holds and \ref{b}) holds after Lemma~\ref{lemma1}. Now suppose that \ref{a}) and \ref{b}) hold. We will make a proof by contradiction, so suppose that $M$ does not run in time $Cn+D$. Let $w$ be the shortest input for $M$ such that there exists a computation of $M$ on $w$ of length more than $C|w|+D$. Denote this computation by $\zeta$ and let $n=|w|$, $t=Cn+D$. From \ref{a}) it follows that $n>\ell=D+4\cdot2q^C$, so after Lemma~\ref{lemma2} there exist at least three identical crossing sequences, produced by $M$ on input $w$ on computation $\zeta$ after $t$ steps on boundaries $\tup{1}{2}{n}$. Let these crossing sequences be generated on boundaries $ i_1< i_2<i_3$. Because $\crs_{i_1}^{t}(M,\zeta,w)$ and $\crs_{i_3}^{t}(M,\zeta,w)$ are of equal length, the head of $M$ is, before the $(t+1)$st step of computation $\zeta$, left of boundary $i_1$ or right of boundary $i_3$. Without loss of generality we can assume that the head is right from $i_3$ (if not, we can rename $i_1=i_2$ and $i_2=i_3$ and continue with the proof). Thus, no crossing sequence on boundaries $\tup{i_1}{i_1+1}{i_2}$ changes in the $(t+1)$st step of computation $\zeta$. Let $i_1\leq j_1<j_2\leq i_2$ be the closest boundaries such that $\crs_{j_1}^{t+1}(M,\zeta,w)=\crs_{j_2}^{t+1}(M,\zeta,w)$. Then crossing sequences $\crs_j^t(M,\zeta,w)$, for $j_1\leq j<j_2$, are pairwise distinct and do not change in the $(t+1)$st step of computation $\zeta$.

Let $\zeta_1$ be the computation on part $w(j_1,j_2)$ with ending crossing sequence $\crs$ that corresponds to $\zeta$ and let $\zeta_2$ be a computation on input $w(0,j_1)w(j_2,n)$ that in first $(t+1-|\zeta_1|)$ steps corresponds to the first $(t+1)$ steps of $\zeta$. Because input $w(0,j_1)w(j_2,n)$ is strictly shorter than $n$, $M$ makes at most $C(|w(0,j_1)|+|w(j_2,n)|)+D$ steps on any computation on this input, thus $t+1-|\zeta_1|\leq|\zeta_2|\leq C(|w(0,j_1)|+|w(j_2,n)|)+D$. From $t=Cn+D$ and $n= |w(0,j_1)|+|w(j_2,n)|+j_2-j_1$ it follows that $|\zeta_1|\geq C(j_2-j_1)+1$, thus $t_M(w(j_1,j_2),\crs)>C|w(j_1,j_2)|$.

Next, we will cut out some pieces of $w$ to eliminate as many redundant parts as possible (if they exist), while leaving the part of $w$ between boundaries $j_1$ and $j_2$ intact. Redundant parts are those, where identical crossing sequences are generated on computation $\zeta$ after $t$ steps. We will cut out parts recursively and the result will not necessarily be unique. 

Suppose that $\crs_k^{t}(M,\zeta,w)=\crs_l^{t}(M,\zeta,w)$ for $1\leq k<l\leq j_1$ or $j_2\leq k<l\leq n$. Cut out the part of $w$ between the $k$th and $l$th boundary.  Let $w'$ be the new input. Let the boundaries $j_1'$ and $j_2'$ for input $w'$ correspond to the boundaries $j_1$ and $j_2$ for input $w$. Let $\zeta'$ be a computation on $w'$ that corresponds to $\zeta$ (at least for the first $t$ steps of $\zeta$) and let $t'$ be the step in computation $\zeta'$ that corresponds to the step $t$ of computation $\zeta$. Now recursively find new $k$ and $l$. The recursion ends, where there are no $k,l$ to be found.

From the recursion it is clear that at the end we will get an input for $M$ of the form $w_0=w_1w(j_1,j_2)w_2$, where $|w_1|\geq 1$. Let $\zeta_0$ be the computation that corresponds to $\zeta$ after the cutting out (at least for the first $t$ steps of $\zeta$) and let $t_0$ be the step in $\zeta_0$ that corresponds to $t$. Denote $n_0=|w_0|$. From the construction it is clear that $M$ on input $w_0$ on computation $\zeta_0$ after $t_0$ steps generates the crossing sequence $\crs$ on boundaries $|w_1|$  and $|w_1|+j_2-j_1$. What is more, the crossing sequences on boundaries $\tup{1}{2}{|w_1|}$ are pairwise distinct. The same is true for the crossing sequences on boundaries $\tup{|w_1|+1}{|w_1|+2}{|w_1|+j_2-j_1}$ and the crossing sequences on boundaries $\tup{|w_1|+j_2-j_1}{|w_1|+j_2-j_1+1}{n_0}$.

Denote $\tilde{w}=w_1w_2$ and $\tilde{n}=|w_1|+|w_2|$. Let the computation $\tilde{\zeta}$ on $\tilde{w}$ be a computation that corresponds to $\zeta_0$ (at least for the first $t_0$ steps of $\zeta_0$) and let $\tilde{t}$ be the time step of $\tilde{\zeta}$ that corresponds to the time step $t_0$ of $\zeta_0$.
Because $\tilde{n}< n_0\leq n$ and because $w$ is the shortest input for $M$ that violates the $Cn+D$ bound, $M$ makes at most $C\tilde{n}+D$ steps on any computation on input $\tilde{w}$, thus also on computation $\tilde{\zeta}$. Note that no three crossing sequences from $\{\crs_i^{\tilde{t}}(M,\tilde{\zeta},\tilde{w});1\leq i\leq \tilde{n}\}$ are identical. Thus after Lemma~\ref{lemma2}, $\tilde{n}\leq D+4\cdot 2q^C=\ell$. Because $\crs_{|w_1|}^{\tilde{t}}(M,\tilde{\zeta},\tilde{w})=\crs$, it follows that $\crs\in\crsSet$. From \ref{b}) and $t_M(w(j_1,j_2),\crs)>C|w(j_1,j_2)|$ we can now deduce $j_2-j_1>r$.

There are at most three identical crossing sequences generated by $M$ on input $w_0$ on computation $\zeta_0$ after $t_0$ steps on boundaries $\tup{1}{2}{n_0}$. Because $n_0>j_2-j_1>r= D+4\cdot 3q^C$, it follows from Lemma~\ref{lemma2} that $t_0>Cn_0+D$. Because $n_0\leq n$ and $w$ is the shortest input on which the time bound $Cn+D$ is violated, $n_0=n$ and thus by the construction of $w_0$, $w_0=w$. It follows that $t_0=t$ which is a contradiction because $t=Cn+D$ and $t_0>Cn_0+D$.\qed


\end{proof}
\section{Computational Complexity of \runCD}
	\label{uporaba}
In this section we prove the main results of this paper.

Let us fix $\Sigma$ and $\Gamma$ for this section and assume $0,1\in\Sigma$. All Turing machines discussed in this sections will have the input alphabet $\Sigma$ and the tape alphabet $\Gamma$. 

Because we will deal with codes of one-tape NTMs a lot, let us say what properties we want from their encoding:
\begin{itemize}
\item given a code of a $q$-state one-tape NTM $M$, a multi-tape NTM can simulate each step of $M$ in $\OO(q^2)$ time,
\item the code of a  $q$-state one-tape NTM has to be of length $\TT(q^2)$ and has to start with at least one redundant zero, followed by a redundant 1 and
\item a code of a composition of one-tape NTMs can be computed in linear time by a multi-tape DTM.
\end{itemize}

An example of such an encoding is given in Appendix~\ref{kod}.
%

A \emph{padded code} of a one-tape NTM $M$ is any code of $M$, padded in front by any number of zeros. Thus the padded code of a one-tape NTM can be arbitrary long.

Now we are ready to state the first complexity bound. The detailed proof can be found in Appendix~\ref{dokazZG}, we give here just the idea.
\begin{proposition}
	\label{upper}
There exists a  multi-tape NTM that solves $\overline{\runE}$  in time $\OO(p(C,D)q^{C+2})$ for some quadratic polynomial $p$.
\end{proposition}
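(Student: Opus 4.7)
The plan is to apply the compactness theorem (Theorem~\ref{glavni}): $M$ does not run in time $Cn+D$ if and only if condition (a) or condition (b) of the theorem fails. The NTM nondeterministically chooses which of the two failures to witness.

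To witness failure of (a), it guesses an input $w$ with $|w|\le \ell=D+8q^C$ and nondeterministically simulates $M$ on $w$ for $C|w|+D+1$ transitions, accepting iff the whole simulation is consistent with $\delta$. To witness failure of (b), it guesses a part $w$ with $|w|\le r=D+12q^C$, a candidate ending crossing sequence $\crs$, and nondeterministically simulates a valid computation $\zeta$ of $M$ on part $w$ with ending $\crs$, accepting this sub-branch only if $|\zeta|>C|w|$. To additionally certify that $\crs\in\crsSet$, the NTM guesses $w_1,w_2$ with $|w_1|\ge 1$ and $|w_1|+|w_2|\le\ell$ and nondeterministically simulates $M$ on $w_1w_2$ for at most $C\ell+D$ transitions, checking that the sequence of states recorded on boundary $|w_1|$ equals $\crs$.

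Correctness follows directly from Theorem~\ref{glavni}: if $M\in\runCD$ then both (a) and (b) hold and no branch can accept; if $M\notin\runCD$ then at least one of the two conditions fails and the corresponding branch can guess a witness. Each simulated transition of $M$ costs $\OO(q^2)$ by the encoding assumption, and with $\ell,r=\OO(q^C)$ each of the simulations above performs $\OO(Cq^C+CD+D)$ transitions; multiplying yields total running time $\OO(p(C,D)\,q^{C+2})$ for some quadratic polynomial $p$.

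The main obstacle is bounding the certifying simulation for $\crs\in\crsSet$: a priori, the shortest $M$-computation realizing $\crs$ on some boundary could have exponential length, which would blow the time budget. The key observation is that branch B only needs to succeed when condition (a) holds -- otherwise branch A already accepts -- and under (a) every computation of $M$ on any input of length $\le\ell$ halts within $C\ell+D=\OO(q^C)$ steps, keeping the certification within budget. An analogous length analysis for $\zeta$ itself, together with the routine details of the encoding-level simulation, is deferred to Appendix~\ref{dokazZG}.
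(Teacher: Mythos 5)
Your overall plan — reduce to the compactness theorem, then nondeterministically guess a witness to the failure of (a) or (b) and simulate — matches the paper's strategy, and your treatment of branch A and of certifying $\crs\in\crsSet$ is sound. The gap is in the last sentence, where you claim an ``analogous length analysis'' bounds the simulation of $\zeta$ on part $w$. It is not analogous. For the certification, condition (a) directly caps the running time because the certifying computation lives on an \emph{input} of length at most $\ell$. The computation $\zeta$ on part $w$, by contrast, corresponds (after pasting $w$ back between $w_1$ and $w_2$) to a computation on an input $w_0=w_1ww_2$ of length up to $\ell+r$, which condition (a) does not cover. Even when (a) holds, a valid $\zeta$ with $|\zeta|>C|w|$ could a priori be arbitrarily long (think of $M$ burning exponential time inside $w$ before finally consuming the ending crossing sequence). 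Your branch B then either runs out of its claimed $\OO(Cq^C+CD+D)$ time budget, or — if you stop early — accepts a partial computation that you cannot certify extends to a \emph{valid} computation on the part, breaking soundness.

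The paper sidesteps this by not decoupling the certification from the part-simulation: it runs a \emph{single} nondeterministic simulation of $M$ on $w_0=w_1w_2w_3$ (with $|w_1|+|w_3|\le\ell$, $|w_2|\le r$) under the hard step budget $C|w_0|+D$, maintaining two counters and the two crossing sequences at boundaries $|w_1|$ and $|w_1|+|w_2|$. It accepts either because the budget is exceeded (a direct witness that $M\notin\runCD$, valid for any input length), or because at some step the two crossing sequences coincide and more than $C|w_2|$ steps were spent inside $w_2$. This way the combined computation — and hence the embedded computation on part $w_2$ together with its certifying context — is automatically capped at $C(\ell+r)+D=\OO((CD+C+D)q^C)$ steps. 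To repair your version you would need to extend branch A to inputs of length up to $\ell+r$ (so that an over-long combined computation is caught there), cap branch B's part-simulation by $C(\ell+r)+D$ transitions, and argue that under the extended branch A some valid $\zeta$ within that budget exists whenever (b) fails; without that addition, the claimed time bound is not justified.
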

\begin{proof}[idea]
First, suppose that $C$ and $D$ are parameters and only a $q$-state one-tape NTM $M$ is given as input. To verify whether $M$ does not run in time $Cn+D$, we non-deterministically choose an input $w$ for $M$ (or its part) of length $\OO(q^C)$ and simulate $M$ on this input (or its part) to verify whether ~\ref{a}) (or~\ref{b})) from the compactness theorem fails. If so, we accept, else, we reject. Because we need $\OO(q^2)$ time to simulate one step of $M$ and we will not simulate more than $C|w|+D$ steps, we need $\OO(q^{C+2})$ steps overall. It can be shown that the constant hidden behind big $\OO$ notation is  $\OO(p(C,D))$, for some quadratic polynomial $p$.\qed
\end{proof}
%
%

The next two lemmas will play a crucial role in lower bound proofs we are about to give later. The detailed proofs can be found in Appendix~\ref{dokazaL}. The main idea is the same for both proofs, thus we only give the idea for the proof of Lemma~\ref{simulacija2}.
\begin{lemma}
	\label{simulacija}
Let $C\geq 2$ and $D\geq1$ be integers, let $T(n)= Kn^k+1$ for some integers $K,k\geq 1$ and let $M$ be a one-tape $q$-state NTM that runs in time $T(n)$. Then there exists an $\OO\left(\left(T(n)^{1/(C-1)}+n\right)^2\right)$-time multi-tape DTM that given an input $w$ for $M$, constructs a one-tape 
NTM $\tilde{M}$ such that
\begin{center}
$\tilde{M}$ runs in time $Cn+D$ \textbf{iff} $M$ rejects $w$.
\end{center}
\end{lemma}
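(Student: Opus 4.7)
The plan is to construct $\tilde M$ as a one-tape NTM that, on inputs of length at least some threshold $N$, simulates $M$ on $w$; if the simulation ever reaches $\qa$ on a nondeterministic branch, $\tilde M$ enters an infinite loop, and otherwise it halts. On inputs shorter than $N$ it halts quickly. Hence every computation of $\tilde M$ is bounded iff no accepting simulation exists, i.e., iff $M$ rejects $w$.

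I would assemble $\tilde M$ stage by stage. First, hardcode $w$ (via $O(|w|)$ states that write $w$ on the tape) and $M$'s transition function (via states that mirror $M$'s transitions, contributing $O(q)$ states). Second, add a threshold counter, kept in binary on the tape so that only $O(\log T(|w|))$ control states are needed, that decides on a single left-to-right sweep of the input whether $n\geq N$; during the same sweep $\tilde M$ overwrites the input with $w$ in the first $|w|$ cells followed by blanks. Third, if $n<N$, $\tilde M$ halts; otherwise it returns the head to cell $0$ and runs the hardcoded simulation of $M$ on $w$. Fourth, on reaching $\qa$ it enters a distinguished state $q_L$ from which it steps right forever (never entering $\qa$ or $\qr$); on reaching $\qr$ it halts. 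The threshold is chosen so that the dominant cost ``scan $+$ return $+$ simulation''$\;\approx\;2n+T(|w|)$ is absorbed into $Cn+D$ whenever $n\geq N$; a choice $N=\Theta(T(|w|)/(C-1))$ suffices.

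Correctness follows from these choices. If $M$ rejects $w$, every nondeterministic simulation ends in $\qr$, so every computation of $\tilde M$ halts, in $n+O(1)$ steps for $n<N$ and in at most $Cn+D$ steps for $n\geq N$; hence $\tilde M$ runs in time $Cn+D$. If $M$ accepts $w$, any input of length at least $N$ admits a nondeterministic branch that runs the accepting simulation, enters $q_L$, and loops, violating $Cn+D$. The state count of $\tilde M$ is $O(|w|+\log T(|w|))$, so a multi-tape DTM can write its description in time $O(|w|^2+\log^2 T(|w|))$, well within $O((T(n)^{1/(C-1)}+n)^2)$. The main obstacle is keeping the total number of steps within $Cn+D$: the overwrite, the return to cell $0$ and the simulation each cost roughly $n$ or more, and for small $C$ these phases must be interleaved carefully (for $C=2$ the return pass has to be merged with either the overwrite or the simulation to save a factor), which is what dictates the precise form of the threshold.
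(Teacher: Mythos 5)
Your high-level plan is the same as the paper's: build $\tilde M$ so that on short inputs it halts quickly, and on long-enough inputs it simulates $M$ on $w$ and loops forever exactly when some branch accepts. The gap is in how you check that the input is long enough.

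You propose a ``threshold counter, kept in binary on the tape so that only $O(\log T(|w|))$ control states are needed, that decides on a single left-to-right sweep whether $n\ge N$'' with $N=\Theta(T(|w|))$. This cannot work within the constraints of the lemma. A single left-to-right sweep of a one-tape machine is, as far as the \emph{length} of the input is concerned, just a DFA: to recognize $\{\,\tilde w : |\tilde w|\ge N\,\}$ a DFA needs $\Omega(N)$ states, so $O(\log T(|w|))$ control states are far too few. If instead you try to maintain a $\Theta(\log N)$-bit binary counter \emph{on the tape}, you must either (i) park it at the left end and return to it after every cell, costing $\Theta(n)$ per increment and $\Theta(n^2)$ total, or (ii) drag it along with the head, costing $\Theta(\log N)$ per cell and $\Theta(n\log T(|w|))$ total; both exceed the $Cn+D$ budget that $\tilde M$ must respect even on the ``reject'' side. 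So your claimed state count $O(|w|+\log T(|w|))$ and the resulting construction time $O(|w|^2+\log^2 T(|w|))$ are unachievable; note that they would even beat the $O\big((T(|w|)^{1/(C-1)}+|w|)^2\big)$ bound in the lemma statement, which is a warning sign.

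The paper sidesteps exact length-checking. It picks $C-1$ primes $p_1<\cdots<p_{C-1}$ of size $\Theta(m)$ with $m=\Theta\big(T(|w|)^{1/(C-1)}\big)$ and has $\tilde M$ make $C-1$ sweeps, the $i$th of which checks in finite state (using $p_i$ states) that $|\tilde w|\equiv 0 \pmod{p_i}$, rejecting otherwise. This accepts only a \emph{subset} of the lengths $\ge N$, namely multiples of $p_1\cdots p_{C-1}>m^{C-1}\ge cT(|w|)$, which is all that is needed, and it costs only $O(m)=O(T(|w|)^{1/(C-1)})$ states and $(C-1)|\tilde w|+1$ steps before the simulation phase. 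That is exactly where the exponent $1/(C-1)$ in the lemma comes from. Your scheme also has a secondary, smaller issue: with only one checking sweep you budget ``$2n+T(|w|)$'' against $Cn+D$, which needs $n\ge (T(|w|)-D)/(C-2)$ and already breaks at $C=2$ (you acknowledge this but wave it off with ``interleaving''); the $(C-1)$-sweep scheme handles all $C\ge 2$ uniformly, using $(C-1)n+1$ steps for the checks and $\le n$ steps for the simulation, totalling $Cn+1\le Cn+D$.
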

\begin{lemma}
	\label{simulacija2}
Let $C\geq 2$ and $D\geq 1$  be integers and let $T(n)= Kn^k+1$ for some integers $K,k\geq 1$. Then there exists a multi-tape DTM $M_{\textrm{mult}}$, which given an input $(M,w)$, where $w$ is an input for a one-tape $q$-state NTM $M$, constructs a one-tape 
NTM $\tilde{M}$ such that
\begin{center}
$\tilde{M}$ runs in time $Cn+D$ \textbf{iff} $M$ makes at most $T(|w|)$ steps on input $w$.
\end{center}
We can make $M_{\textrm{mult}}$ run in time $\OO\left(\left(T(|w|)^{2/(C-1)}+|w|+q\right)^2+(|w|+q)^\kappa\right)$ for some integer $\kappa\geq1$, independent of $C,\ D,\ K$ and $k$.
\end{lemma}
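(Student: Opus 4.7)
The plan is to adapt the construction of Lemma~\ref{simulacija} to the setting where no a priori bound on $M$'s running time is given, by attaching to the simulation an explicit step counter that aborts after $T(|w|)$ simulated steps. Write $T:=T(|w|)$. I would have $M_{\textrm{mult}}$ pick a small digit count $L=\lceil(C-1)/2\rceil$ and a base $b=\lceil T^{2/(C-1)}\rceil$, so that $b^L>T$ and $bL=\TT(T^{2/(C-1)})$, and then emit the code of a one-tape NTM $\tilde{M}$ with three modules: (i) a \emph{boot module} of $\OO(|w|+q)$ states that writes $w$ at cells $0,\ldots,|w|-1$ and enters the start state of $M$; (ii) a \emph{simulation module} that embeds the transitions of $M$ and, between consecutive simulated moves, walks to an $L$-cell base-$b$ counter region immediately to the right of $w$, incrementing it using $\OO(b)$ control states and transitioning into a dedicated \emph{overflow state} as soon as the counter reaches $b^L$; (iii) a \emph{padding module} on the remainder of the input that shapes the running time of $\tilde{M}$ on an input $x$ of length $n$ to be $f(n)+g(t)$, where $f$ is linear in $n$ with slope strictly below $C$ and $g$ depends only on the number $t$ of simulated $M$-steps and on whether the overflow branch was taken. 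The dominant contribution to the state count of $\tilde{M}$ is then $\TT(b+|w|+q)=\TT(T^{2/(C-1)}+|w|+q)$.

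I would then verify the iff via the compactness theorem (Theorem~\ref{glavni}). If every computation of $M$ on $w$ halts within $T$ steps, then on every branch of $\tilde{M}$ the counter never overflows, $g(t)$ is bounded by a constant $c=c(T,|w|,q)$, and the slope of $f$ has been chosen so that $f(n)+c\leq Cn+D$ for every $n$; both conditions a) and b) of Theorem~\ref{glavni} are then easy to check and $\tilde{M}\in\runCD$. Conversely, if some computation of $M$ on $w$ uses more than $T$ steps, the corresponding nondeterministic branch of $\tilde{M}$ reaches the overflow state, which I would route into a small gadget forcing a computation longer than $Cn+D$ on a specific input whose length exceeds the threshold $r$ of Theorem~\ref{glavni}. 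For the running time of $M_{\textrm{mult}}$, writing the transition table of a machine with $\TT(T^{2/(C-1)}+|w|+q)$ states takes quadratic time in this state count---this produces the main $(T^{2/(C-1)}+|w|+q)^2$ summand---while the $(|w|+q)^\kappa$ term covers the deterministic assembly of the boot, simulation, and padding modules from $(M,w)$, with $\kappa$ independent of $C,D,K,k$.

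The main obstacle is the tight calibration of the counter. Because the counter shares the single tape with $M$'s simulated computation and with the remainder of the input, each simulated $M$-step requires a walk over the $L$-cell counter region, and the total simulation cost is $\OO(T\cdot L)$. Making this fit inside the linear slack left by the padding module on inputs of length $\TT(T^{2/(C-1)})$ is what pins down $L=\TT(C-1)$ and hence $b=\TT(T^{2/(C-1)})$, producing the exponent $2/(C-1)$ in the state bound. Once these constants are fixed, the remaining tasks---verifying $\tilde{M}\in\runCD$ on the good side via Theorem~\ref{glavni} and producing an explicit overflow-witness input on the bad side---are routine bookkeeping, with Lemma~\ref{malaLema} handling the crossing-sequence counts needed by the compactness theorem.
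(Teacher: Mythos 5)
Your high-level plan (boot + simulate-with-counter + control the length) is in the right neighborhood, but there are two real gaps.

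\textbf{No gate against short inputs.} Your $\tilde{M}$ must run in time $Cn+D$ on \emph{every} input, including inputs of length $0,1,2,\ldots$. As you describe it, $\tilde{M}$ always boots ($\Omega(|w|)$ steps to write $w$) and then simulates $M$ with the step counter, contributing the term $g(t)$ which you bound by a ``constant'' $c=c(T,|w|,q)$. That constant is at least $\Omega(T(|w|))$ (one counter increment per simulated step), so your required inequality $f(n)+c\le Cn+D$ fails outright whenever $n$ is small and $T(|w|)>D$; no choice of slope for $f$ can rescue this. The paper's construction avoids this by a dedicated first phase whose only job is to \emph{reject quickly} on short inputs: $\tilde{M}$ makes $(C-1)$ passes, each testing divisibility of $|\tilde{w}|$ by a prime in $(m,2m]$ with $m=\TT(T(|w|)^{2/(C-1)})$, and rejects unless all tests pass. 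That gate uses only $\OO(m)$ states (this is where the exponent $2/(C-1)$ comes from), costs at most $(C-1)n+1$ steps, and guarantees that the second phase executes only when $|\tilde{w}|>m^{C-1}\ge (cT(|w|))^2$. Your ``padding module on the remainder of the input'' does not perform this length verification, and without it the iff simply does not hold.

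\textbf{The calibration is inconsistent.} You argue that the simulation cost $\OO(T\cdot L)$ must fit inside the linear slack on inputs of length $\TT(T^{2/(C-1)})$, and use that to pin down $L=\TT(C-1)$ and $b=\TT(T^{2/(C-1)})$. But the simulation alone already costs $\Omega(T)$ steps, and for $C>3$ one has $T^{2/(C-1)}=\oo(T)$, so inputs of length $\TT(T^{2/(C-1)})$ leave far too little linear slack to absorb the simulation. (You are also underestimating the per-step overhead: after a simulated step the head has to travel from $M$'s current cell back to the counter region, so the cost is not $\OO(L)$ per step.) The correct setup is the one in the paper: the state count is $\OO(T^{2/(C-1)}+|w|+q)$, but the second phase executes only on inputs of length $\Omega(T^2)$, and the $\OO(T^2)$ simulation cost then fits in the single remaining pass of length $|\tilde{w}|$. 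You have conflated the state bound with the input-length bound. Finally, invoking the compactness theorem for the forward direction is unnecessary overkill here: once the gate guarantees $|\tilde{w}|>(cT(|w|))^2$ on the branch where phase two runs, the bound $\tilde{M}$ runs in time $Cn+1$ iff $M$ makes at most $T(|w|)$ steps on $w$ follows directly, without any crossing-sequence machinery.
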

\begin{proof}[idea] First, assume that $M$ is a parameter and only $w$ is the input. Let us describe $\tilde{M}$. 
The computation of $\tilde{M}$ on input $\tilde{w}$ consists of two phases. In the first phase, $\tilde{M}$ uses at most $(C-1)$ deterministic passes through the input to assure that $\tilde{w}$ is long enough, more specifically, $|\tilde{w}|=\OOm(T(|w|)^2)$. If  $\tilde{w}$ is shorter, $\tilde{M}$ rejects. In the second phase, $\tilde{M}$ computes $T(|w|)$ and simulates $M$ on $w$ for at most $T(|w|)$ steps.  If $M$ does not halt, $\tilde{M}$ starts an infinite loop, else it halts.

A straightforward implementation of the second phase can be done so that $\tilde{M}$ uses $\OO(T(|w|)^2)$ time in this phase, if the infinite loop does not occur (the square is because we need to count simulated steps). However, the main idea is in the first phase. It has to be implemented in such a way, that if the infinite loop does not occur, the second phase will take at most $|\tilde{w}|$ time, thus assuring that 
\begin{center}
$\tilde{M}$ runs in time $Cn+1\footnote{Why do we write $Cn+1$ and not just $Cn$? One reason is that all Turing machines make at least one step on the empty input.}$ \textbf{iff} $\tilde{M}$ runs in time $Cn+D$ \textbf{iff} $M$ makes at most $T(|w|)$ steps on input $w$.
\end{center}

The first phase needs to be implemented with as few states as possible to ensure that $M_{\textrm{mult}}$ will run in the appropriate time. The trick is to hard-code $(C-1)$ prime numbers of size $\TT(T(|w|)^{2/(C-1)})$ into the states of $\tilde{M}$. Then $\tilde{M}$ can make $(C-1)$ passes through the input in the first phase, each time verifying whether $|\tilde{w}|$ is divisible by some prime number. If the first phase does not reject, we know that $|\tilde{w}|=\OOm(T(|w|)^2)$, just as desired. 

Now suppose that $M$ is not a parameter any more. Considering that $\tilde{M}$ has $\OO\left(T(|w|)^{2/(C-1)}+|w|+q\right)$ states and that a NTM doing the second phase can be constructed in time $(|w|+q)^\kappa$, the result follows. \qed
\end{proof}
The next theorem puts \runCD\ into an appropriate complexity class.
\begin{theorem}
The problems $\runCD$ are \cNPC\ for all $C\geq 2$ and $D\geq 1$.
\end{theorem}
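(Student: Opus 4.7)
The plan is to verify \cNPP-membership and \cNPC-hardness separately; both are essentially packaging steps on top of results already established in the excerpt.

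For membership, I would invoke Proposition~\ref{upper} directly: it supplies a multi-tape NTM solving $\overline{\runE}$ in time $\OO(p(C,D)q^{C+2})$. With $C$ and $D$ fixed constants, and given that the code of a $q$-state one-tape NTM has length $n=\TT(q^2)$, this time bound reads $\OO(n^{(C+2)/2})$, i.e.\ polynomial in the length of the input. Hence $\overline{\runCD}\in \NPP$, which is precisely $\runCD\in \cNPP$.

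For hardness, I would give a polynomial-time many-one reduction from an arbitrary \NPC\ language $L$ to $\overline{\runCD}$. Standard simulation shows that every NP language is decided in polynomial time by some one-tape NTM (a multi-tape NTM running in time $T$ is simulated by a one-tape NTM in time $\OO(T^2)$); since $\{0,1\}\subseteq \Sigma$, we may fix one such one-tape NTM $M_L$ with input alphabet $\Sigma$, deciding $L$ in time $T(n)=Kn^k+1$ for some integers $K,k\geq 1$. Now apply Lemma~\ref{simulacija} with this $M_L$ in the role of $M$: the lemma supplies a multi-tape DTM which, on any input $w$, runs in time $\OO\bigl((T(|w|)^{1/(C-1)}+|w|)^2\bigr)$ and outputs a one-tape NTM $\tilde{M}_w$ with the property that $\tilde{M}_w$ runs in time $Cn+D$ iff $M_L$ rejects $w$. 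Since $C\geq 2$ and $T$ is polynomial, the construction time is polynomial in $|w|$. Negating both sides of the ``iff'' gives $w\in L$ iff $\tilde{M}_w\in \overline{\runCD}$, which is the desired polynomial-time reduction, so $\overline{\runCD}$ is \NPC\ and therefore $\runCD$ is \cNPC.

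There is no substantial obstacle: Lemma~\ref{simulacija} and Proposition~\ref{upper} do all the heavy lifting, and the theorem is really just the statement that the two bounds match in the complexity-class sense. The only sanity checks are (i) that the output of Lemma~\ref{simulacija} is genuinely a code of a one-tape NTM over the fixed $(\Sigma,\Gamma)$, which is built into the construction there, and (ii) that the factor $T(|w|)^{1/(C-1)}$ stays polynomial in $|w|$, which is immediate because $C-1\geq 1$ and $T$ is polynomial.
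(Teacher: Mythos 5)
Your proposal is correct and takes essentially the same route as the paper: membership via Proposition~\ref{upper} (the time bound $\OO(p(C,D)q^{C+2})$ is polynomial in $n=\TT(q^2)$ once $C,D$ are fixed), and hardness via Lemma~\ref{simulacija}, applied to a polynomial-time one-tape NTM deciding an arbitrary \NPP\ language, giving a Karp reduction to $\overline{\runCD}$. You merely phrase the reduction as $L\leq_m\overline{\runCD}$ for $L\in\NPP$ rather than the paper's equivalent $L'\leq_m\runCD$ for $L'\in\cNPP$; the content is the same.
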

\begin{proof}
Proposition~\ref{upper} proves that these problems are in \cNPP\ and Lemma~\ref{simulacija} gives a Karp reduction of an arbitrary problem in \cNPP\ to the above ones. 
\qed
\end{proof}
The first lower bound follows.
\begin{theorem}
	\label{LB1}
Let $C\in\NN$ and $D\geq 1$ be constants. Then the problem $\overline{\runCD}$ can not be solved by a multi-tape NTM in time $\oo(q^{(C-1)/2})$.
\end{theorem}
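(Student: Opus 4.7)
The plan is to combine Lemma~\ref{simulacija} with the non-deterministic time hierarchy of Seiferas--Fischer--Meyer and Zak~\cite{Seiferas,Zak}, applied at the time scale $n^{(C-1)/2}$. I focus on the main case $C\geq 6$; when $C\leq 5$ the exponent satisfies $(C-1)/2\leq 2$, so the bound $\oo(q^{(C-1)/2})$ is at most $\oo(q^2)=\oo(n)$ (sublinear in the input length $n=\TT(q^2)$), and the claim follows from a direct sensitivity/encoding argument showing that the nontrivial problem $\overline{\runCD}$ cannot be decided by any multi-tape NTM running in sublinear non-deterministic time.

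Suppose, toward contradiction, that some multi-tape NTM $N$ decides $\overline{\runCD}$ in time $f_N(q)=\oo(q^{(C-1)/2})$. Given any multi-tape $L\in\NTIME(n^{(C-1)/2})$, the standard quadratic multi-tape-to-one-tape simulation gives a one-tape NTM $M$ accepting $L$ in time $T(n)=Kn^{C-1}+1$ for a suitable constant $K$. Applying Lemma~\ref{simulacija} to $M$ with this $T$ and the given $C,D$, one has $T(n)^{1/(C-1)}=\OO(n)$, so the lemma furnishes a deterministic $\OO(n^2)$-time transducer $w\mapsto\tilde{M}$ in which $\tilde{M}$ is a one-tape NTM on $\tilde{q}=\OO(n)$ states satisfying $\tilde{M}\in\overline{\runCD}$ iff $w\in L$. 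Composing this transducer with $N$ then produces a multi-tape NTM $A$ deciding $L$ in total time
\[ g_A(n) \;=\; \OO(n^2)+f_N(\OO(n)) \;=\; \OO(n^2)+\oo(n^{(C-1)/2}). \]
For $C\geq 6$ we have $(C-1)/2>2$, so $\OO(n^2)\subseteq\oo(n^{(C-1)/2})$, and hence $g_A(n)=\oo(n^{(C-1)/2})$.

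Now apply Zak's non-deterministic time hierarchy with the time-constructible function $n^{(C-1)/2}$ as the upper side and $g_A$ as the lower side: since $g_A(n+1)=\oo(n^{(C-1)/2})$, it yields $\NTIME(g_A)\subsetneq\NTIME(n^{(C-1)/2})$, so there exists a language $L_0\in\NTIME(n^{(C-1)/2})\setminus\NTIME(g_A)$. Running the construction of the previous paragraph on this very $L_0$ decides $L_0$ in time $g_A$, contradicting $L_0\notin\NTIME(g_A)$.

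The main obstacle is the $\OO(n^2)$ overhead of the reduction, forced by the size $\TT(\tilde{q}^2)=\TT(n^2)$ of $\tilde{M}$'s code that must be physically written down. This is exactly what imposes the threshold $C\geq 6$: for smaller $C$ the quadratic reduction cost swamps $\oo(n^{(C-1)/2})$ and the hierarchy-based separation at this level collapses. A minor technical point is to replace $f_N$ by a time-constructible majorant so that $g_A$ is time-constructible and Zak's theorem applies verbatim, but this is routine.
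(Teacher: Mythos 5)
Your argument is correct and drives on the same two engines as the paper's proof: Lemma~\ref{simulacija} as the reduction and the non-deterministic time hierarchy theorem as the source of hardness. The only structural difference is a change of variable in where the hierarchy is anchored: you choose $L\in\NTIME(n^{(C-1)/2})$ and apply Lemma~\ref{simulacija} with $T(n)=Kn^{C-1}+1$, producing codes with $\tilde{q}=\OO(n)$ at reduction cost $\OO(n^2)$, whereas the paper chooses $L\in\NTIME(n^{C-1})$ (so the one-tape simulation $M'$ runs in time $\OO(n^{2(C-1)})$), yielding $\tilde{q}=\OO(n^2)$ at cost $\OO(n^4)$. Both parametrizations force the same threshold $C\geq 6$ for the reduction cost to fit under the time budget, so this is cosmetic. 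One presentational wrinkle worth smoothing: the running time $g_A$ of your machine $A$ depends on the chosen language $L$, so the phrasing ``there exists $L_0\in\NTIME(n^{(C-1)/2})\setminus\NTIME(g_A)$'' is circular, since $g_A$ was derived for a generic $L$ rather than for $L_0$. The paper sidesteps this by invoking the strong form of the hierarchy theorem up front: fix a single $L$ decided by some multi-tape NTM in time $\OO(n^{(C-1)/2})$ but not decided by any multi-tape NTM in time $\oo(n^{(C-1)/2})$; then the machine built for this particular $L$ decides it in $\oo(n^{(C-1)/2})$ time when $C\geq 6$, an immediate contradiction, and there is no need to verify time-constructibility of $g_A$ at all.
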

\begin{proof}
For $C\leq 5$, the theorem holds (the length of the input is $\TT(q^2)$), so suppose $C\geq 6$.  By the non-deterministic time hierarchy theorem~\cite{Seiferas,Zak} there exists a language $L$ and a multi-tape NTM $M$ that decides $L$ and runs in time $\OO(n^{C-1})$, while no multi-tape NTM can decide $L$ in time $\oo(n^{C-1})$. We can reduce 
 the number of tapes of $M$ to get a one-tape NTM $M'$ that runs in time $\OO(n^{2(C-1)})$ and decides $L$.  By Lemma~\ref{simulacija}, there exists a multi-tape DTM $M_{\textrm{mult}}$ that runs in time $\OO(n^{4})$ and given an input $w$ for $M'$, constructs a one-tape $\tilde{q}$-state NTM $\tilde{M}$ such that
\begin{center}
$\tilde{M}$ runs in time $Cn+D$ \textbf{iff} $M'$ rejects $w$.
\end{center}
Because the description of $\tilde{M}$ has length $\OO(|w|^{4})$, it follows that $\tilde{q}=\OO(|w|^{2})$. If there was some multi-tape NTM that would decide $\overline{\runCD}$ in time $\oo(q^{(C-1)/2})$, we could combine it with $M_{\textrm{mult}}$ to get an $\oo(n^{C-1})$-time multi-tape NTM that would decide $L$, which does not exist. Hence, the problem $\overline{\runCD}$ can not be solved by a multi-tape NTM in time $\oo(q^{(C-1)/2})$. \qed
\end{proof}

\begin{theorem}
	\label{LB2}
Let $C\in\NN$ and $D\geq 1$ be constants. Then the problem $\runCD$ can not be solved by a multi-tape NTM in time $\oo(q^{(C-1)/4})$.
\end{theorem}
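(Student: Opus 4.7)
The plan is a direct self-referential diagonalization, using Lemma~\ref{simulacija2} to transfer the hardness of ``does $M$ make at most $T(|w|)$ steps on $w$?'' into $\runCD$. For $C\le 9$ the exponent $(C-1)/4$ is at most $2$, so $\oo(q^{(C-1)/4})$ is dominated by the input length $\TT(q^2)$ and the theorem is trivial; from now on assume $C\ge 10$. Suppose for contradiction that some multi-tape NTM $A$ decides $\runCD$ in time $f(q)=\oo(q^{(C-1)/4})$; choose an integer $k>\max\{2\kappa,(C-1)/2\}$, where $\kappa$ is the constant from Lemma~\ref{simulacija2}, and set $T(n)=n^k+1$.

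By Kleene's recursion theorem, construct a one-tape NTM $N^*$ (with a fixed, constant number of states) whose own code is available during its computation. On input $w$, $N^*$ carries out the following, simulating every multi-tape subroutine on its single tape with the standard quadratic overhead:
\begin{enumerate}
\item Run $M_{\mathrm{mult}}$ from Lemma~\ref{simulacija2} on $(N^*,w)$ to obtain the one-tape NTM $\tilde M$.
\item Simulate $A$ on $\tilde M$, mirroring each nondeterministic choice of $A$ by one of its own.
\item If the simulated branch of $A$ accepts, loop for $T(|w|)+1$ additional steps and halt; if it rejects, halt immediately.
\end{enumerate}

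Because $|N^*|=\OO(1)$, we have $\tilde q=\OO(T(|w|)^{2/(C-1)}+|w|)$. On one tape, step~2 costs $\oo(\tilde q^{(C-1)/2})=\oo(T(|w|)+|w|^{(C-1)/2})=\oo(T(|w|))$ since $k>(C-1)/2$, and step~1 costs $\OO(T(|w|)^{8/(C-1)}+|w|^{2\kappa})=\oo(T(|w|))$ because $C\ge 10$ forces $8/(C-1)<1$ and $k>2\kappa$. So for every sufficiently large $|w|$, each ``halt immediately'' branch of $N^*$ on $w$ has length strictly less than $T(|w|)$, and each ``loop'' branch has length strictly greater than $T(|w|)$.

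Finally, Lemma~\ref{simulacija2} gives $\tilde M\in\runCD$ iff $N^*$ makes at most $T(|w|)$ steps on $w$. Fix any sufficiently large $w$. If $\tilde M\in\runCD$, every branch of $N^*$ on $w$ is short, yet $A$ has an accepting branch on $\tilde M$ and the matching branch of $N^*$ is long---contradiction. If $\tilde M\notin\runCD$, then $A$ rejects on every branch, so every branch of $N^*$ halts fast, contradicting the existence of a long branch. Hence $A$ cannot exist. The delicate point is the time accounting: the hypothesis $f(q)=\oo(q^{(C-1)/4})$ squares under one-tape simulation to $\oo(q^{(C-1)/2})$, which is just barely small enough to keep steps~1--2 below $T(|w|)$ once $k$ is chosen large enough; confirming that the multi-tape-to-one-tape overhead does not swallow all the slack is the main obstacle.
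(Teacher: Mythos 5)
Your proof is correct and follows essentially the same route as the paper's: both use Lemma~\ref{simulacija2} together with a self-referential diagonalization, convert the hypothetical $\oo(q^{(C-1)/4})$ multi-tape solver to a one-tape $\oo(q^{(C-1)/2})$ solver, and exploit $8/(C-1)<1$ for $C\geq 10$ so that building $\tilde M$ on one tape stays below $T(|w|)$. The only cosmetic differences are that you invoke Kleene's recursion theorem and cap the diagonal loop at $T(|w|)+1$ steps, whereas the paper feeds the fixed machine its own padded code and uses a genuine infinite loop; your explicit choice $k>\max\{2\kappa,(C-1)/2\}$ (vs.\ the paper's implicit $k=\kappa(C-1)$) makes the time accounting a bit more transparent but is not a different argument.
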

\begin{proof}
For $C\leq 9$, the theorem holds, so suppose $C\geq 10$. Let $\kappa$ be as in Lemma~\ref{simulacija2}.

 Let $M$ be the following one-tape NTM:
\begin{itemize}
\item On input $w$, which is a padded code of a one-tape NTM $M'$, construct a one-tape $\tilde{q}$-state NTM $\tilde{M}$ such that
\begin{center}
$\tilde{M}$ runs in time $Cn+D$ \textbf{iff} $M'$ makes at most $|w|^{\kappa(C-1)}$ steps on input $w$.
\end{center}
This can be done with a multi-tape DTM in time $\OO(|w|^{4\kappa})$ by Lemma~\ref{simulacija2}. Hence, it can be done in  $\OO(|w|^{8\kappa})$ by a one-tape DTM.
\item Verify whether $\tilde{M}$ runs in time $Cn+D$. If so, start an infinite loop, else halt.
\end{itemize}

Suppose that $\runCD$ can be solved by a multi-tape NTM in time $\oo(q^{(C-1)/4})$. Then it can be solved in time $\oo(q^{(C-1)/2})$ by a one-tape NTM. Considering $\tilde{q}^2=\OO(|w|^{4\kappa})$, we see that $M$ makes $\OO(|w|^{8\kappa})+\oo(|w|^{\kappa(C-1)})=\oo(|w|^{\kappa(C-1)})$ steps before starting the infinite loop.

It is easy to see (standard diagonalization) that $M$ on input $w$, which is a padded code of $M$, halts and makes more than $|w|^{\kappa(C-1)}$ steps. Since the padding can be arbitrary long, we have come to a contradiction.\qed
\end{proof}

\subsubsection{The Limits of Our Lower Bound Methods.} Let $T:\NN\rightarrow\NN$ be a function. In both our lower bounds (Theorem~\ref{LB1} and Theorem~\ref{LB2}) we used the following technique: We can simulate a Turing machine $M$ on input $w$ for $T(|w|)$ steps on a one-tape $\tilde{q}$-state NTM $\tilde{M}$ that runs in linear time, say $Cn+D$. We can do so by first ensuring that $\tilde{M}$ simulates $M$ only if the input $\tilde{w}$ for $\tilde{M}$ is long enough, specifically $\OOm(T(|w|))$. 

If we want $\tilde{M}$ to measure $T(|w|)$, it can not produce the same crossing sequence on some accepting computation on inputs of length $T(|w|)$. After Lemma~\ref{lemma2}, this implies $\tilde{q}^C=\OOm(T(|w|))$, hence by our method we can not prove e.g. a lower bound $\OOm(q^C)$ for solving $\runCD$ even if we use DTMs for solving it.

\subsubsection{An Open Problem.} \emph{For $D\in\NN$, how hard are the problems $\runOD$?}

 It is clear that we can solve problems $\run{C}{0}{}$, for $C\in\NN$,  in constant time. The answer is always NO, since any Turing machine makes at least one step on empty input.

It is also clear that we can solve problems $\run{0}{D}{}$, for $D\in\NN$, in polynomial time. We should only simulate a given one-tape NTM on inputs up to the length $(D+1)$. If they run in time $D$, they will never read the $(D+1)$st bit of the input.

For $C\geq 2$ and $D\geq 1$, good complexity bounds for $\runCD$ are given in this paper. Hence only the bounds for $C=1$ are missing.

For this case, it is easy to see that the problem $\run{1}{1}{}$ is solvable in  deterministic polynomial time. The reason for this is the fact, that a one-tape NTM that runs in time $(n+1)$ never moves its head to the left, except possibly in the first and the last step of the computation (the more detailed explanation of why this is so can be found in~\cite{jaz}).
Does a similar property hold for general $D$?

\subsubsection{Acknowledgements.} The author wishes to thank his research advisor Sergio Cabello for valuable comments.
\bibliographystyle{splncs03}
\bibliography{literature}

\newpage
\appendix
\section{Appendix}
\subsection{A Technical Proof from Section~\ref{prelim}}
	\label{a1}
Here we restate and prove Lemma~\ref{malaLema}.

\begin{lemmaA}[\ref{malaLema}]
For every $q\geq 2$ and $C\in\NN$, it holds
$$\sum_{j=0}^C q^j(C-j)=\frac{q^{C+1}-(C+1)q+C}{(q-1)^2}\leq 4q^{C-1}.$$
\end{lemmaA}
\begin{proof}
\begin{align*}
\sum_{j=0}^C q^j(C-j)&=C\sum_{j=0}^C q^j-q\frac{d}{dq}\left(\sum_{j=0}^C q^j\right)\\
&=C\frac{q^{C+1}-1}{q-1}-q\frac{d}{dq}\left(\frac{q^{C+1}-1}{q-1}\right)\\
&=\frac{q^{C+1}-(C+1)q+C}{(q-1)^2}.
\end{align*}
It is easy to see that, for $q\geq 2$, it follows $\frac{q^{C+1}-(C+1)q+C}{(q-1)^2}\leq\frac{q^{C+1}}{(q-1)^2}\leq 4q^{C-1}$.\qed
\end{proof}

\subsection{An Example of an Encoding of a One-Tape NTM}
	\label{kod}
In Sect.~\ref{uporaba}, we listed the following properties we want from the encoding of one-tape NTMs:
\begin{itemize}
\item given a code of a $q$-state one-tape NTM $M$, a multi-tape NTM can simulate each step of $M$ in $\OO(q^2)$ time,
\item the code of a  $q$-state one-tape NTM has to be of length $\TT(q^2)$ and has to start with at least one redundant zero, followed by a redundant 1 and
\item a code of a composition of one-tape NTMs can be computed in linear time by a multi-tape DTM.
\end{itemize}

Now we describe an example of such an encoding. A \emph{primitive code} of a $q$-state NTM $M$ is
\begin{itemize}
\item the code of a 3-dimensional array of size $q\times |\Gamma|\times q$, where the element with index $(q_1,a,q_2)$ is  a list of all pairs $(b,d)$, such that $(q_2,b,d)\in\delta(q_1,a)$ \textbf{or}
\item the code of a tuple $(\tup{M_1}{M_2}{M_k})$, where $M$ is a composition of one-tape NTMs $\tup{M_1}{M_2}{M_k}$.
\end{itemize}
Note that the same one-tape NTM can have several primitive codes (which is ok). Since some primitive codes can be too short, we assume that each primitive code is padded in front with $0^m1$, where $1\leq m\leq 4q^2$ ensures that the code is at least of length $q^2$. It follows that our code of a one-tape NTM always starts with redundant zeros, followed by a redundant one.

Note that all three desired properties hold for such an encoding. Specifically, if we want to compute the code of a $q$-state one-tape NTM $M$, which is a composition of several given one-tape NTMs, we can do so in $\OO(q^2)$ time by a multi-tape DTM.

\subsection{A Proof of Proposition~\ref{upper}}
	\label{dokazZG}

\begin{propositionA}[\ref{upper}]
There exists a  multi-tape NTM that solves $\overline{\runE}$  in time $\OO(p(C,D)q^{C+2})$ for some quadratic polynomial $p$.
\end{propositionA}
\begin{proof}
Le us describe a multi-tape NTM $M_\textrm{mult}$ that solves $\overline{\runE}$. 
\begin{itemize}
\item On input $(C,D,M)$, where $M$ is a $q$-state one-tape NTM, compute $\ell=D+8q^C$ and $r=D+12q^C$. 
\item Non-deterministically choose an input of length $n\leq1$ and simulate a non-deterministically chosen computation of $M$ on it. If $M$ makes more than $Cn+D$ steps, accept.
\item Non-deterministically choose words $w_1$, $w_2$ and $w_3$ such that $|w_1|\geq 1$, $1\leq|w_2|\leq r$ and $|w_1|+|w_3|\leq \ell$. Define empty crossing sequences $\crs_1$, $\crs_2$ and counters $t_0=C|w_0|+D$, $t_2=C|w_2|$.
\item Simulate a non-deterministically chosen computation $\zeta$ of $M$ on the input $w_0=w_1w_2w_3$. After each simulated step $t$  of $M$, do:
\begin{itemize}
	\item decrease $t_0$ by one,
	\item if the head of $M$ is on some cell $|w_1|\leq i<|w_1|+|w_2|$, decrease $t_2$ by one,
	\item update the crossing sequences $\crs_1=\crs_{|w_1|}^t(M,\zeta,w_0)$ and $\crs_2=\crs_{|w_1|+|w_2|}^t(M,\zeta,w_0)$.
	\item If $t_0<0$,  accept.
	\item Non-deterministically decide whether to do the following: \begin{itemize}\item If $\crs_1=\crs_2$ and $t_2<0$,  accept. Else, reject.\end{itemize}
	\item If $M$ halts,  reject.
\end{itemize}
Note that the counter $t_0$ counts the number of simulated steps, while the counter $t_2$ counts the number of steps done on the part $w_2$.
\end{itemize}

The compactness theorem assures that $M_\textrm{mult}$ correctly solves $\overline{\runE}$.

Because the condition $\crs_1=\crs_2$ is verified at most once during the algorithm and $|\crs_1|,|\crs_2|\leq C|w_0|+D\leq C(\ell+r)+D$, verification of $\crs_1=\crs_2$ contributes $\OO((CD+C+D+1)q^{C+1})$ time to the overall running time.
Because $M_{\textrm{mult}}$ needs $\OO(q^2)$ steps to simulate one step of $M$'s computation and it has to simulate at most  $C(\ell+r)+D$ steps, $M_{\textrm{mult}}$ runs in time $\OO((CD+C+D+1)q^{C+2})$.
\qed
\end{proof}

\subsection{Proofs of Lemmas~\ref{simulacija} and~\ref{simulacija2}}
	\label{dokazaL}
Before going into the proofs, let us give a simple corollary of the prime number theorem.
\begin{lemma}
	\label{prastevila}
For each integer $C\geq 1$, there exists a constant $N_C$ such that, for all $n\geq N_C$, there exist prime numbers $p_1,p_2\cdots p_{C}$, such that $n< p_1<p_2<\cdots<p_{C}\leq 2n$.
\end{lemma}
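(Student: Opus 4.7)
The plan is to derive this directly from the prime number theorem (PNT). Recall that PNT states $\pi(n) = (1+o(1)) \cdot n/\ln n$, where $\pi(n)$ counts the primes at most $n$. So what we really need to show is that the number of primes strictly between $n$ and $2n$, namely $\pi(2n) - \pi(n)$, tends to infinity as $n \to \infty$; once that is established, for each fixed $C$ the value $N_C$ is obtained by taking any threshold past which $\pi(2n) - \pi(n) \geq C$.

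First I would compute the asymptotics of $\pi(2n) - \pi(n)$ using PNT:
\begin{align*}
\pi(2n) - \pi(n) &= (1+o(1))\frac{2n}{\ln(2n)} - (1+o(1))\frac{n}{\ln n} \\
&= (1+o(1))\frac{n}{\ln n},
\end{align*}
where the last equality follows because $\ln(2n) = \ln n + \ln 2 = (1+o(1))\ln n$. Since $n/\ln n \to \infty$, so does $\pi(2n) - \pi(n)$.

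Then for the main statement, fix $C \geq 1$. By the divergence just established, choose $N_C$ such that $\pi(2n) - \pi(n) \geq C$ for all $n \geq N_C$. Then for any such $n$, the interval $(n, 2n]$ contains at least $C$ primes, and listing the smallest $C$ of them in increasing order gives $p_1 < p_2 < \cdots < p_C$ with $n < p_1$ and $p_C \leq 2n$, as required.

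There is no real obstacle here; the statement is essentially a packaged form of PNT (indeed, even Bertrand's postulate together with the fact that $\pi$ grows unboundedly would suffice if one iterated the argument on nested intervals, but PNT gives the quantitative gap directly). The only mild care needed is in handling the $o(1)$ terms when subtracting the two asymptotic expressions, which is routine once one notes $\ln(2n)/\ln n \to 1$.
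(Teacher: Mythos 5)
Your proof is correct and follows essentially the same route as the paper: both use the prime number theorem to show $\pi(2n)-\pi(n) = (1+o(1))\,n/\ln n \to \infty$ and then read off $N_C$ as the threshold past which this count exceeds $C$.
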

\begin{proof}
For $n\in\NN$, let $\pi(n)$ be the number of primes smaller than or equal to $n$. So $\pi(2n)-\pi(n)$ is the number of primes between $n$ and $2n$. The prime number theorem gives us 
$$\lim_{n\rightarrow\infty}\frac{\pi(n)\log(n)}{n}=1,$$
which implies
$$\lim_{n\rightarrow\infty}\frac{\pi(2n)-\pi(n)}{\frac{n}{\log n}}=1.
$$
Hence $\pi(2n)-\pi(n)>\frac{n}{2\log n}>C$, for all large enough $n$.\qed
\end{proof}

\begin{lemmaA}[\ref{simulacija}]
Let $C\geq 2$ and $D\geq1$ be integers, let $T(n)= Kn^k+1$ for some integers $K,k\geq 1$ and let $M$ be a one-tape $q$-state NTM that runs in time $T(n)$. Then there exists an $\OO\left(\left(T(n)^{1/(C-1)}+n\right)^2\right)$-time multi-tape DTM that given an input $w$ for $M$, constructs a one-tape 
NTM $\tilde{M}$ such that
\begin{center}
$\tilde{M}$ runs in time $Cn+D$ \textbf{iff} $M$ rejects $w$.
\end{center}
\end{lemmaA}
\begin{proof}
Let us first describe the NTM $\tilde{M}$. The computation of $\tilde{M}$ on input $\tilde{w}$ will consist of two phases. In the first phase, $\tilde{M}$ will 
use at most $(C-1)$ deterministic passes through the input to assure that $\tilde{w}$ is long enough. We will describe this phase in detail later. 



In the second phase, $\tilde{M}$ will write $w$ on its tape and simulate $M$ on $w$. Hence $\OO(|w|)$ states and $\OO(T(|w|))$ time are needed for this phase (note that $q$ is a constant). If $M$ accepts $w$, $\tilde{M}$ starts an infinite loop, else it halts. Let $c$ be a constant such that $\tilde{M}$ makes at most $cT(|w|)$ steps in the second phase before starting the infinite loop.


We make some preparations before explaining the first phase. Let $N_C$ be the constant from Lemma~\ref{prastevila} and let 
$$m=\max\left\{N_C,\left\lceil(cT(|w|))^{1/(C-1)}\right\rceil\right\}=\OO(T(|w|)^{1/ (C-1)}).$$
From Lemma~\ref{prastevila} it follows that there exist prime numbers $p_1,p_2\cdots p_{C-1}$  such that $m< p_1<p_2<\cdots<p_{C-1}\leq 2m$. 

Now we are ready to explain the first phase. 
The machine $\tilde{M}$ simply passes through the input $(C-1)$ times, each time verifying that $|\tilde{w}|$ is divisible by some $p_i$, for $i=\tup{1}{2}{(C-1)}$. If this is not the case, $\tilde{M}$ rejects. Else, the second phase is to be executed. It suffices to have $p_i$ states to verify in one pass if the input is divisible by $p_i$, so we can make $\tilde{M}$ have $\OO(\sum p_i)=\OO((C-1)m)=\OO(m)$ states for the first phase such that it makes at most $(C-1)|\tilde{w}|+1$ steps before entering the second phase. We assume that $\tilde{M}$ erases all symbols from the tape in the last pass of the first phase so that the second phase can begin with a blank tape.

If the second phase begins, we know that $|\tilde{w}|\geq p_1\cdot p_2 \cdots p_{C-1}> m^{C-1}\geq cT(|w|)$, thus $\tilde{M}$ makes at most $|\tilde{w}|$ steps in the second phase \textbf{iff} it does not go into an infinite loop. So we have proven that
\begin{center}
$\tilde{M}$ runs in time $Cn+1$ \textbf{iff} $\tilde{M}$ runs in time $Cn+D$ \textbf{iff} $M$ rejects $w$.
\end{center}

To construct $\tilde{M}$, we need to compute $m$ and find prime numbers $p_1,p_2\cdots p_{C-1}$, which takes at most $\OO(m^2+|w|)$ time: $T(|w|)$ can be computed in $\OO(|w|)$ time and we can test all numbers between $m$ and $2m$ for primality using AKS algorithm~\cite{AKS}. After that, the transition function on all $\OO(m+|w|)$ states of $\tilde{M}$ can  be straightforwardly constructed, so the description of $\tilde{M}$ can be obtained in $\OO((m+|w|)^2)$ time.\qed
\end{proof}
\begin{lemmaA}[\ref{simulacija2}]
Let $C\geq 2$ and $D\geq 1$  be integers and let $T(n)= Kn^k+1$ for some integers $K,k\geq 1$. Then there exists a multi-tape DTM $M_{\textrm{mult}}$, which given an input $(M,w)$, where $w$ is an input for a one-tape $q$-state NTM $M$, constructs a one-tape 
NTM $\tilde{M}$ such that
\begin{center}
$\tilde{M}$ runs in time $Cn+D$ \textbf{iff} $M$ makes at most $T(|w|)$ steps on input $w$.
\end{center}
We can make $M_{\textrm{mult}}$ run in time $\OO\left(\left(T(|w|)^{2/(C-1)}+|w|+q\right)^2+(|w|+q)^\kappa\right)$ for some integer $\kappa\geq1$, independent of $C,\ D,\ K$ and $k$.
\end{lemmaA}
\begin{proof}
The proof is very similar to the proof of Lemma~\ref{simulacija}. Again, we begin with the description of $\tilde{M}$. The computation of $\tilde{M}$ on input $\tilde{w}$ will consist of two phases. In the first phase, $\tilde{M}$ will measure the input and in the second phase, it will compute $T(|w|)$ and simulate $M$ on $w$ for at most $T(|w|)$ steps. If $M$ will not halt, $\tilde{M}$ will start an infinite loop, else it will halt. $\tilde{M}$ can do the second phase without the infinite loop  in time $\OO(T(|w|)^2)$ using $\OO(|w|+q)$ states.

A possible implementation of the second phase is the following:
\begin{itemize}
\item Compute $\textsl{counter}=T(|w|)$ and write its digits on cells with odd indices. 
\item Encode $[w$ with a marked first symbol$]$ on cells with even indices. The mark indicates the position of the head of $M$.
\item Simulate $M$ on $w$ on cells with even indices while for each simulated step reducing the \textsl{counter} by one.
\item If $M$ halts before the counter gets to zero, halt. Else, start an infinite loop.
\end{itemize}
Suppose $\tilde{M}$ makes at most $(cT(|w|))^2$ steps in the second phase before starting the infinite loop.

To describe the first phase of $\tilde{M}$'s computation, let $N_C$ be the constant from Lemma~\ref{prastevila} and let 
$$m=\max\left\{N_C,\left\lceil(cT(|w|))^{2/(C-1)}\right\rceil\right\}=\OO\left(T(|w|)^{2/(C-1)}\right).$$
The first phase of $\tilde{M}$'s computation is essentially the same as in the proof of Lemma~\ref{simulacija}. The machine $\tilde{M}$ simply passes through the input $(C-1)$ times, each time verifying, that $|\tilde{w}|$ is divisible by some prime $m<p\leq 2m$. If this is not the case, $\tilde{M}$ halts. Else, the second phase is to be executed.
$\tilde{M}$ uses $\OO(m)$ states for the first phase and it makes at most $(C-1)|\tilde{w}|+1$ steps before entering the second phase. $\tilde{M}$ can erase all symbols from the tape in the last pass of the first phase so that the second phase can begin with a blank tape.

If the second phase begins, we know that $|\tilde{w}|> m^{C-1}\geq (cT(|w|))^2$, thus $\tilde{M}$ makes at most $|\tilde{w}|$ steps in the second phase \textbf{iff} it does not go into an infinite loop. So we have proven that
\begin{center}
$\tilde{M}$ runs in time $Cn+1$ \textbf{iff} $\tilde{M}$ runs in time $Cn+D$ \textbf{iff} $M$ makes at most $T(|w|)$ steps on input $w$.
\end{center}

Now let us describe a multi-tape DTM $M_{\textrm{mult}}$ that constructs $\tilde{M}$ from $(M,w)$. Because $\tilde{M}$ uses $\OO(q+|w|)$ states for the second phase, $M_{\textrm{mult}}$ can construct a one-tape NTM $M_2$ that does the second phase in time $\OO((q+|w|)^\kappa)$ for some integer $\kappa$, independent of the function $T$. Independence can be achieved by considering that  $T(|w|)$ is computed at the beginning of the second phase with $\OO(1)$ states. Because the first phase does not depend on $M$, $M_{\textrm{mult}}$ can compute the DTM that does the first phase in time $\OO(m^2)$, as in the proof of Lemma~\ref{simulacija}.
 Since $\tilde{M}$ has $\OO(m+|w|+q)$ states and is a composition of $M_1$ and $M_2$, $M_{\textrm{mult}}$ can construct $\tilde{M}$ in time $\OO\left(\left(T(|w|)^{2/(C-1)}+|w|+q\right)^2+(|w|+q)^\kappa\right).$\qed
\end{proof}

\end{document}